\documentclass[a4paper,aps,prl,twocolumn,10pt,superscriptaddress]{revtex4-1}
%comment
\usepackage[utf8]{inputenc}
\usepackage[T1]{fontenc}
\usepackage[sc,osf]{mathpazo}\linespread{1.05}
\usepackage{amsmath, amsthm, amssymb}
\usepackage{graphicx}
\usepackage{dcolumn}
\usepackage{dsfont}
\usepackage{bm}
\usepackage{bbm}
\usepackage{hyperref}
\usepackage{tikz}
\usepackage{overpic}

\usepackage{subcaption}

% theorems and lemmas
\theoremstyle{plain}
\newtheorem{lemma}{Lemma}
\newtheorem{prop}[lemma]{Proposition}
\newtheorem{theorem}[lemma]{Theorem}

\theoremstyle{definition}

\makeindex

%-----------------------------------------------------------------------------------------------------------------------------------------------------------------------------------------------------------------------------

\begin{document}

\title{Conditional Decoupling of Quantum Information}
\date{\today}

\author{Mario Berta}
\affiliation{Department of Computing, Imperial College London, London SW7 2AZ, UK}
%\affiliation{Institute for Quantum Information and Matter, California Institute of Technology, Pasadena, California 91125, USA}
\author{Fernando G.~S.~L.~Brand\~{a}o}
\affiliation{Institute for Quantum Information and Matter, California Institute of
Technology, Pasadena, California 91125, USA}

\author{Christian Majenz}
\affiliation{Institute for Language, Logic and Computation, University of Amsterdam, and QuSoft, 1098XG Amsterdam, Netherlands}
\affiliation{Department of Mathematical Sciences, University of Copenhagen, Universitetsparken 5, DK-2100 Copenhagen}

\author{Mark M.~Wilde}
\affiliation{Hearne Institute for Theoretical Physics, Department of Physics and Astronomy, Center for Computation and Technology, Louisiana State University, Baton Rouge, Louisiana 70803, USA}

\begin{abstract}
Insights from quantum information theory show that correlation measures based on quantum entropy are fundamental tools that reveal the entanglement structure of multipartite states. In that spirit, Groisman, Popescu, and Winter [Physical Review A \textbf{72}, 032317 (2005)] showed that the quantum mutual information $I(A;B)$ quantifies the minimal rate of noise needed to erase the correlations in a bipartite state of quantum systems $AB$. Here, we investigate correlations in tripartite systems $ABE$. In particular, we are interested in the minimal rate of noise needed to apply to the systems $AE$ in order to erase the correlations between $A$ and $B$ given the information in system $E$, in such a way that there is only negligible disturbance on the marginal $BE$. We present two such models of conditional decoupling, called deconstruction and conditional erasure cost of tripartite states $ABE$. Our main result is that both are equal to the conditional quantum mutual information $I(A;B|E)$ -- establishing it as an operational measure for tripartite quantum correlations.
\end{abstract}

\maketitle

%-----------------------------------------------------------------------------------------------------------------------------------------------------------------------------------------------------------------------------

\paragraph{Introduction.} Landauer's principle states that the amount of work needed for erasing a memory is proportional to the amount of information stored in the memory~\cite{L61}. Motivated by this principle, the correlations of a bipartite quantum state $\rho_{AB}$ shared between two parties Alice and Bob can be quantified by the amount of noise that is required to erase the correlations in $\rho_{AB}$. This erasure cost is closely connected to the thermodynamical cost of erasing the correlations~\cite{GPW05}, which in turn is part of the larger context of the physics of erasure (see, e.g., Refs.~\cite{faist2015minimal,Maruyama2009,plenio2001physics,berut2012experimental,faist2015minimal}). In a model of Groisman, Popescu, and Winter~\cite{GPW05} \footnote{Groisman, Popescu, and Winter discuss various models of how to inject noise into the system; however, ultimately all of them become equivalent.}, Alice is allowed to pick a free ancilla, in the form of an already decoupled state $\theta_{A'}$, and then applies a {\it unitary randomizing channel}
\begin{align}\label{eq:groisman1}
\Lambda_{AA'}(\cdot):=\frac{1}{M}\sum_{i=1}^MU^i_{AA'}\big(\cdot\big)\big(U^i_{AA'}\big)^\dagger\,,
\end{align}
where the noise injected into the system comes from averaging over the unitaries.
The goal is for the resulting state to become close to a product state (or, in other words, {\it decoupled})
\begin{align}\label{eq:groisman2}
F\left(\Lambda_{AA'}(\rho_{AB}\otimes\theta_{A'}),\pi_{A'A}\otimes\rho_B\right)\geq1-\varepsilon\,,
\end{align}
where $\pi_{AA'}$ is a maximally mixed state on a subspace of $AA'$. Here, the action of the channel $\Lambda_{AA'}$ on systems $AA'B$ is understood as $\Lambda_{AA'}\otimes\mathcal{I}_B$, where $\mathcal{I}_B$ denotes the identity channel, and the fidelity between states $\xi$ and $\chi$ is given by $F(\xi,\chi):=[\mathrm{Tr}(\sqrt{\sqrt{\chi}\xi\sqrt{\chi}})]^2$. We note that the use of the ancilla is catalytic in the sense that the system $A'$ has to stay decoupled from $B$ (at least approximately), but potentially makes the erasure process more efficient~\cite{MBDRC16}. The main result of Groisman, Popescu, and Winter~\cite[Thm.~1]{GPW05} is that the minimal rate of unitaries needed in the limit of many copies $\rho_{AB}^{\otimes n}$ and vanishing error $\varepsilon\to0$ is given by the {\it quantum mutual information (QMI)}
\begin{align*}
\frac{1}{n}\log M\to I(A;B)_{\rho}:=H(A)_{\rho}+H(B)_{\rho}-H(AB)_{\rho}\,,
\end{align*}
with the quantum entropy of a state $\eta_{X}$ on system $X$ given by $H(X)_{\eta}:=-\operatorname{Tr}\big[\eta_{X}\log\eta_{X}\big]$. Thus, we can conclude that the QMI is equal to the amount of noise needed for {\it correlation destruction} between systems $A$ and $B$. This result gives information-theoretic justification for the diverse use of the QMI as a correlation measure in quantum physics. For instance, it is a stepping stone in a quantitative understanding of \textit{decoupling}, a central concept both in quantum information theory and in physics in general, with implications ranging from the black-hole information paradox \cite{HayPre07,braunstein-pati,braunstein-zyczkowski} to  area laws in quantum many-body systems \cite{Brandao12}. 

%-----------------------------------------------------------------------------------------------------------------------------------------------------------------------------------------------------------------------------

\paragraph{Conditional measures of correlations.} Here, we aim to quantify the correlations in a tripartite quantum state $\rho_{ABE}$. A measure that is (informally) understood as quantifying the correlations between $A$ and $B$ from the perspective of system $E$ is the {\it conditional quantum mutual information (CQMI)}
\begin{align}\label{eq:def_cqmi}
I(A;B|E)_{\rho}:=I(AE;B)_{\rho}-I(E;B)_{\rho}\,.
\end{align}
The CQMI is always non-negative $I(A;B|E)_{\rho}\geq0$, an entropy inequality known as strong sub-additivity~\cite{PhysRevLett.30.434}. The mentioned informal interpretation of the CQMI can be made precise, as it characterizes the resource requirements of the task of quantum state redistribution \cite{DY08} and plays an important role in hypothesis testing of conditional correlations \cite{Tomamichel2018,PhysRevA.94.022310,Berta2017}. The conditional mutual information is also an essential quantity in various areas of physics such as condensed matter physics \cite{Zeng2015,Kim2012}, high energy physics \cite{Pastawski2017,Czech2015}, thermodynamics \cite{Mahajan2016}, and complex and neuronal systems \cite{Bettencourt2008}. The CQMI is closely related to another conditional measure of correlations~\cite{Pet86}, the {\it fidelity of recovery} (FoR)~\cite{SW14}
\begin{align*}
F(A;B|E)_{\rho}:=\sup_{\mathcal{R}_{E\rightarrow AE}}F\big(\rho_{ABE},\mathcal{R}_{E\rightarrow AE}(\rho_{BE})\big)\,,
\end{align*}
where the supremum is with respect to all recovery channels $\mathcal{R}_{E\rightarrow AE}$. The connection of the FoR to the CQMI was only understood very recently in a series of works refining our understanding of multipartite quantum correlations, which began with \cite[Thm.~5.1]{FR14}
\begin{align}\label{eq:fawzi_renner}
I(A;B|E)_{\rho}\geq-\log F(A;B|E)_{\rho}\,.
\end{align}
This shows that the CQMI is a witness to quantum Markovianity: if it is small, then we can understand the correlations between $A$ and $B$ as being mediated by system $E$ via the local recovery channel $\mathcal{R}_{E\rightarrow AE}$. In analogy to the QMI and as a refinement thereof, the CQMI is the basis of various correlation measures in quantum physics. For example, it is a key concept in condensed matter physics, as the CQMI of three regions with a non-trivial topology gives the topological entanglement entropy of the system~\cite{LW06,PK06}. Also in high-energy physics, it has emerged as a important tool to understand the irreversibility of renormalization flow~\cite{L17}.

%-----------------------------------------------------------------------------------------------------------------------------------------------------------------------------------------------------------------------------

\paragraph{Deconstruction of quantum correlations.}

\begin{figure}
\begin{subfigure}{0.25\textwidth}
%\centering
\begin{overpic}[width=\textwidth]{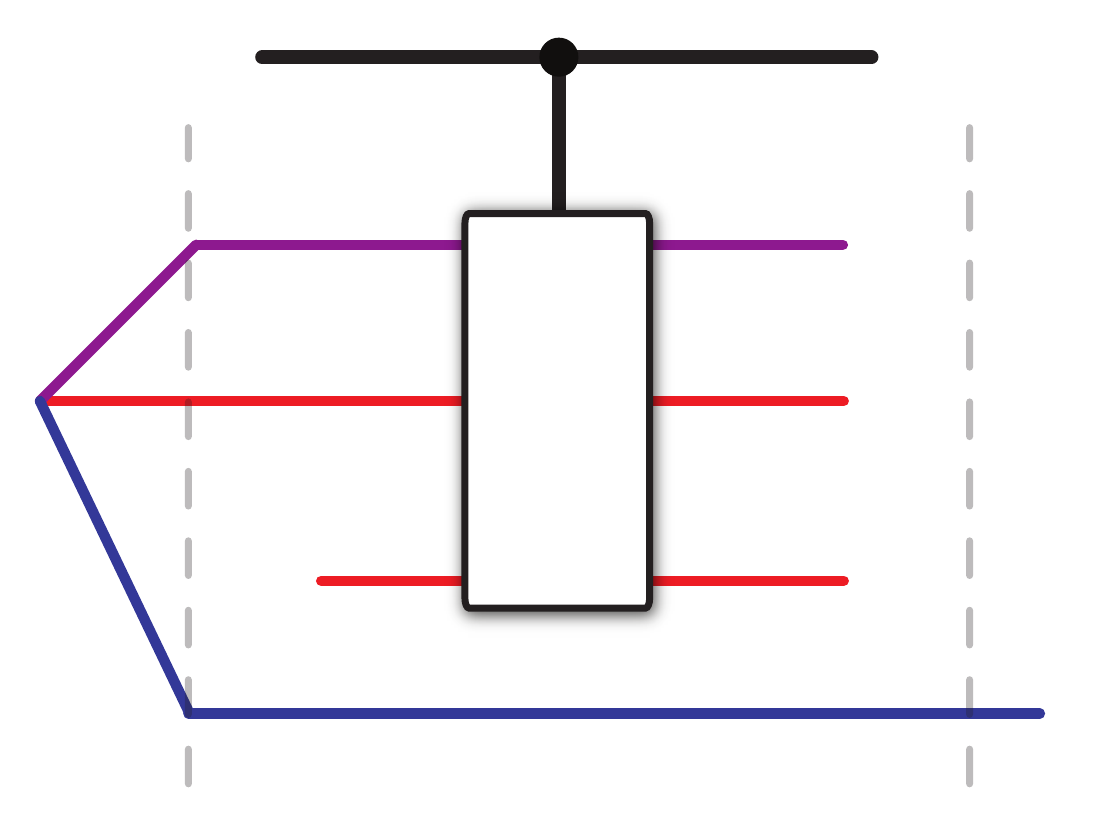}
  \put(25,72){$i$}
  \put(19,55){$E$}
  \put(19,41){$A$}
  \put(29,25){$A'$}
  \put(19,13){$B$}
  \put(47,36){$U^i$}
  \put(60,55){$E$}
  \put(60,41){$A$}
  \put(60,25){$A'$}
  \put(15,-1){$\rho$}
  \put(84,-1){$\omega$}
\end{overpic}
\caption{State deconstruction: $\frac{1}{M}\sum_{i=1}^MU^i_{AA'E}\big(\cdot\big)\big(U^i_{AA'E}\big)^\dagger$.}
\label{fig:deconstruction1}
\end{subfigure}%
\begin{subfigure}{0.25\textwidth}
\begin{overpic}[width=\textwidth]{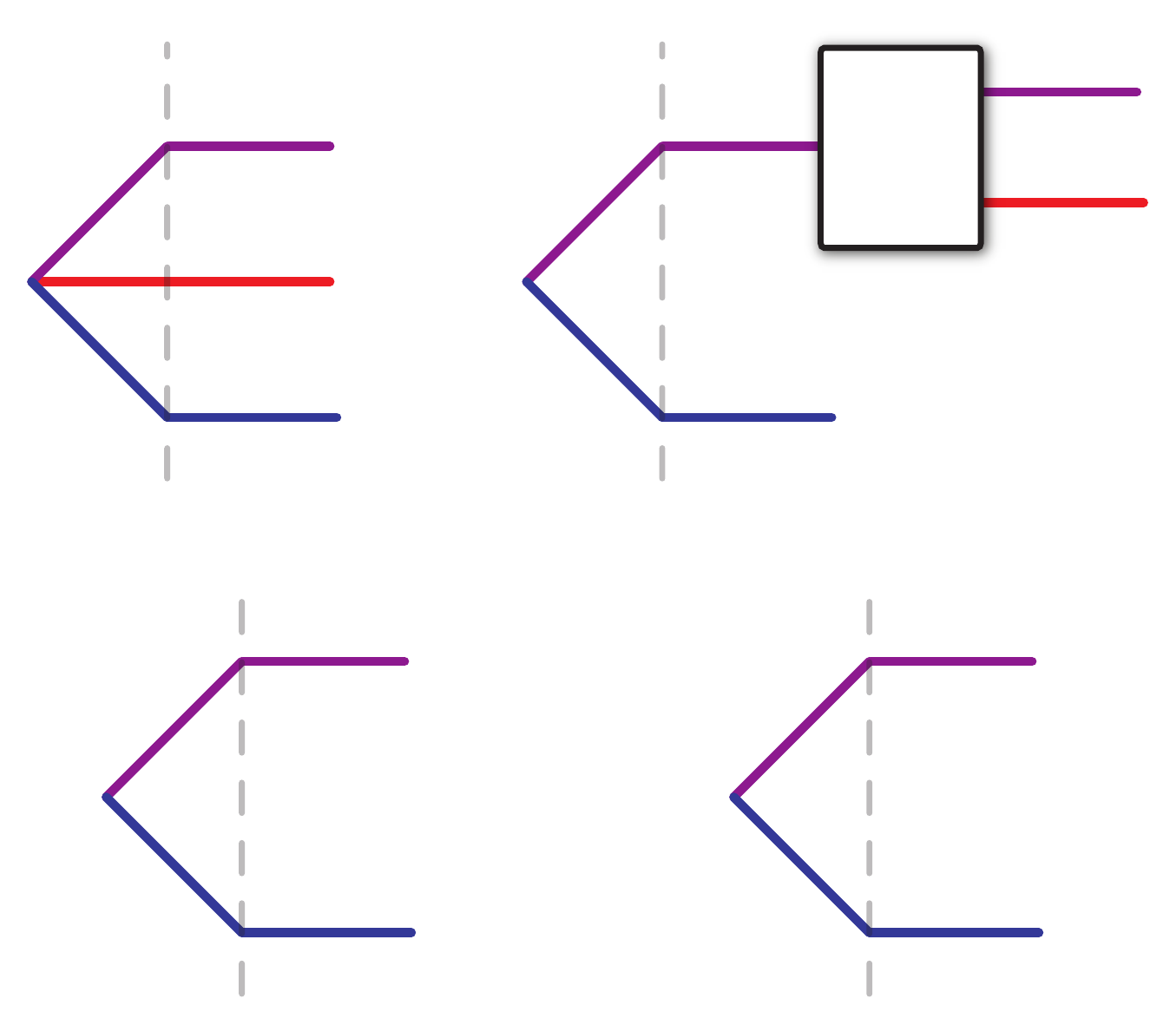}
  \put(16,76){\footnotesize$E$}
  \put(16,64.5){\footnotesize$AA'$}
  \put(16,53){\footnotesize$B$}
  \put(58,76){\footnotesize$E$}
  \put(58,53){\footnotesize$B$}
  \put(73.5,72.5){\footnotesize$\mathcal{R}$}
  \put(85,80.5){\footnotesize$E$}
  \put(85,71.3){\footnotesize$AA'$}
  \put(34,62){$\approx$}
  \put(12,42){\footnotesize$\omega$}
  \put(54,42){\footnotesize$\omega$}
  \put(46,17){$\approx$}
  \put(19,-1.5){\footnotesize$\rho$}
  \put(71.5,-1.5){\footnotesize$\omega$}
  \put(22,9){\footnotesize$B$}
  \put(22,32){\footnotesize$E$}
  \put(75,9){\footnotesize$B$}
  \put(75,32){\footnotesize$E$}
\end{overpic}
\caption{Local recoverability and negligible disturbance.}
\label{fig:deconstruction2}
\end{subfigure}
\caption{Depiction of (a) a state deconstruction protocol $\Lambda_{AA'E}$ with ancilla $\theta_{A'}$ for $\rho_{ABE}$ along with (b) the conditions of local recoverability $F(AA';B|E)_\omega\geq1-\varepsilon$ and negligible disturbance $F\left(\omega_{BE},\rho_{BE}\right)\geq1-\varepsilon$.}
\end{figure}

We note from Eq.~\eqref{eq:def_cqmi} that it is easy to see that the Groisman, Popescu, and Winter result can be invoked to say that $I(A;B|E)_{\rho}$ quantifies the additional cost to erase correlations between $A$ and $BE$ rather than just between $A$ and $E$. What has been missing so far, however, is a direct operational interpretation of the CQMI as a correlation measure in terms of quantum Markovianity. We now present exactly such an interpretation by extending the model of Groisman, Popescu, and Winter to incorporate a conditioning system $E$. We start with a tripartite quantum state $\rho_{ABE}$ and suppose that Alice holds $AE$ and Bob $B$. The task we want to accomplish is more delicate than just the total destruction of correlations between Alice and Bob. Namely, we are interested in the minimum rate of noise that Alice needs to apply to her systems such that
(i) the resulting system $A$ is \textit{locally recoverable} from the $E$ system alone, and
(ii) the correlations between $E$ and $B$ are only \textit{negligibly disturbed}.
We call the task a {\it state deconstruction protocol}, whose aim is to deconstruct (literally, to break into constituent components) the correlations in $\rho_{ABE}$. More precisely, a deconstruction protocol for $\rho_{ABE}$ is given by an already deconstructed, decoupled ancilla state $\theta_{A'}$, and a unitary randomizing channel
\begin{align}\label{eq:unital_randomizing}
\Lambda_{AA'E}(\cdot):=\frac{1}{M}\sum_{i=1}^MU^i_{AA'E}\big(\cdot\big)\big(U^i_{AA'E}\big)^\dagger \, ,
\end{align}
such that for the resulting state
\begin{align}\label{eq:omega_state}
\omega_{AA'BE}:=\Lambda_{AA'E}(\rho_{ABE}\otimes\theta_{A'})\,,
\end{align}
the above conditions (i) \& (ii) are fulfilled
\begin{align}\label{eq:conditions}
F(AA';B|E)_\omega\geq1-\varepsilon\quad\mathrm{\&}\quad F\left(\omega_{BE},\rho_{BE}\right)\geq1-\varepsilon\,.
\end{align}
A graphical depiction is presented in Figures~\ref{fig:deconstruction1} and \ref{fig:deconstruction2}.
The use of the ancilla system $A'$ is again catalytic in the sense that it is part of the output register and thus has to stay deconstructed with respect to $BE$ (at least approximately). We call the minimal rate of unitaries needed in the limit of many copies $\rho_{ABE}^{\otimes n}$ and vanishing error $\varepsilon\to0$ the {\it deconstruction cost of} $\rho_{ABE}$, denoted by $\mathcal{D}(A;B|E)_\rho$.

%-----------------------------------------------------------------------------------------------------------------------------------------------------------------------------------------------------------------------------

\paragraph{Conditional erasure of quantum correlations.} Alternatively, we can replace the local recoverability condition in~\eqref{eq:conditions} with the stronger condition
\begin{align}\label{eq:conditions_stronger}
F(\omega_{AA'BE},\pi_{AA'}\otimes\omega_{BE})\geq1-\varepsilon\,,
\end{align}
where $\pi_{AA'}$ denotes a maximally mixed state on a subspace of $AA'$. By choosing the local recovery channel as $\mathcal{R}_{E\rightarrow AA'E}(\cdot)=(\cdot)\otimes\omega_{AA'}$ we see that this new condition~\eqref{eq:conditions_stronger} surely implies the local recoverability condition in~\eqref{eq:conditions}. The {\it conditional erasure cost of} $\rho_{ABE}$, denoted by $\mathcal{C}(A;B|E)_\rho$, is then defined as the corresponding minimal rate of unitaries needed in the limit of many copies $\rho_{ABE}^{\otimes n}$ and vanishing error $\varepsilon\to0$. Thus, we have by definition $\mathcal{C}(A;B|E)_\rho\geq\mathcal{D}(A;B|E)_\rho$.

%Equivalently, the deconstruction cost can be understood as the minimal rate of noise needed to result in a state that has vanishing CQMI (by the simple converse of~\eqref{eq:fawzi_renner}, see, e.g., \cite[Prop.~35]{BSW14}).

\paragraph{Conditional decoupling.}
Our models for deconstruction and conditional erasure extend the decoupling approach to quantum information theory~\cite{qip2002schu,horodecki05,qcap2008first,dupuis10,dupuis09} to a conditional version. While first conceived in the context of quantum channel \cite{qip2002schu} and source coding \cite{horodecki05}, the decoupling technique has numerous applications in areas as different as cryptography~\cite{Berta14}, quantum thermodynamics~\cite{RARDV10,Aberg13}, black hole radiation~\cite{HayPre07,braunstein-pati,braunstein-zyczkowski}, or many body quantum physics~\cite{Brandao12}.
%the decoupling paradigm has been useful far beyond that application -- including in condensed matter physics \cite{Brandao12},  thermodynamics \cite{RARDV10}, cryptography \cite{Berta14} and the study of black hole radiation \cite{HayPre07}. 
Our models for deconstruction and conditional erasure extend this paradigm in the following sense. In conditional erasure, Alice does not want to erase all her correlations with Bob's system $B$ but only decouple her system $A$ from $B$ conditioned on the information she holds in system $E$, thereby not disturbing the correlations between $E$ and $B$. This negligible disturbance condition is critical: Alice and Bob might want to use their systems $E$ and $B$, respectively, for some later quantum information processing task, so that keeping the correlations intact is essential for the systems to be useful later on. The condition also highlights an essential difference between a semi-classical and fully quantum state deconstruction protocol: in the case that the system $E$ is classical, the negligible disturbance condition is not necessary because one could always observe the value without causing any disturbance to it. However, in the quantum case, the uncertainty principle forbids us from taking a similar action, so that it is necessary for a fully quantum state deconstruction protocol to proceed with a greater sleight of hand.

%-----------------------------------------------------------------------------------------------------------------------------------------------------------------------------------------------------------------------------

\paragraph{Main result.} It is the goal of this letter to show that both the deconstruction cost as well as the conditional erasure cost are given by the CQMI.

\begin{theorem}\label{thm:CQMI-upper-decons}
For any tripartite quantum state $\rho_{ABE}$:
\begin{align*}
\mathcal{D}(A;B|E)_\rho=I(A;B|E)_\rho=\mathcal{C}(A;B|E)_\rho\,.
\end{align*}
\end{theorem}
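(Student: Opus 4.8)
The plan is to establish the chain of inequalities $\mathcal{C}(A;B|E)_\rho \geq \mathcal{D}(A;B|E)_\rho \geq I(A;B|E)_\rho \geq \mathcal{C}(A;B|E)_\rho$, of which the first is already immediate by definition. Since $\mathcal{C} \geq \mathcal{D}$ trivially, it suffices to prove a \emph{converse} bound $\mathcal{D}(A;B|E)_\rho \geq I(A;B|E)_\rho$ (no protocol can use fewer unitaries than the CQMI rate, even under the weaker local-recoverability condition) and an \emph{achievability} bound $\mathcal{C}(A;B|E)_\rho \leq I(A;B|E)_\rho$ (there is a protocol meeting the stronger conditional-erasure condition at rate $I(A;B|E)$). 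These two squeeze all three quantities together.

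For the converse, I would start from a deconstruction protocol achieving rate $R$ with error $\varepsilon_n \to 0$ on $\rho^{\otimes n}$, producing $\omega_{A^nA'B^nE^n}$. The idea is to lower-bound $\log M$ by a difference of entropies. Applying a randomizing channel of $M$ unitaries cannot increase the entropy of the output relative to the input by more than $2\log M$ (indeed $H(AA'E)_\omega \leq H(AA'E)_{\rho\otimes\theta} + 2\log M$ via the standard argument: the channel can be dilated by a flagging register of dimension $M$, and dropping it costs at most $\log M$ each way; more directly, $\Lambda$ followed by a dephasing gives a bound on how far $\omega$ is from $\rho\otimes\theta$). The cleaner route: use that $\omega_{BE} \approx \rho_{BE}$ (negligible disturbance) and $F(AA';B|E)_\omega \approx 1$, invoke the Fawzi--Renner bound~\eqref{eq:fawzi_renner} in the form $I(AA';B|E)_\omega \leq -\log F(AA';B|E)_\omega$ together with a continuity/AEP estimate to conclude $I(A^nA';B^n|E^n)_\omega$ is small (sublinear in $n$). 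Then expand $\log M \geq \tfrac12[H(A^nA'E^n)_\omega - H(A^nA'E^n)_{\rho\otimes\theta}]$-type inequality is too weak; instead write $2\log M \geq H(A^nA'E^n|B^n)_{\rho\otimes\theta} - H(A^nA'E^n|B^n)_\omega$ and combine with $I(A^nA';B^n|E^n)_\omega = H(A^nA'|E^n)_\omega - H(A^nA'|B^nE^n)_\omega \approx 0$, $H(E^n|B^n)_\omega \approx H(E^n|B^n)_\rho$, and $H(A^nA'|E^n)_{\rho\otimes\theta} = H(A^n|E^n)_\rho + H(A')_\theta$ together with the decoupled-ancilla condition $H(A'|B^nE^n)_\omega \approx H(A')_\theta$. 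Collecting terms yields $2\log M \gtrsim H(A^n|E^n)_\rho - H(A^n|B^nE^n)_\rho = n I(A;B|E)_\rho - o(n)$, hence $R \geq I(A;B|E)_\rho$. Some care with the factor of $2$ is needed; the standard fix is a blocking/purification argument (as in GPW) showing the extra factor disappears in the i.i.d.\ limit, or one uses a tighter entropy bound using that $\omega$ is already approximately product on $AA'$ vs $BE$.

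For achievability, I would port the GPW construction conditioned on $E$: by the structure theorem for states with small CQMI (approximate quantum Markov chains), $\rho_{ABE}^{\otimes n}$ is, after a suitable isometry on $E^n$ splitting it as $E_L^n E_R^n$, close to a state on which $A^n$ is correlated only with $E_L^n$ and $B^n$ only with $E_R^n$. On the $A^nE_L^n$ block one runs the ordinary GPW unitary-randomization protocol to decouple $A^n$ from everything, which costs $\tfrac1n\log M \to I(A^n;E_L^n \text{complement})$; the bookkeeping is arranged so that the rate equals $I(A;B|E)_\rho$ because the only "new" correlations $A^n$ has with $B^n$ (beyond those mediated by $E^n$) amount to exactly the conditional mutual information. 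Concretely, one decomposes $\rho_{ABE} \approx \sum_j p_j \, |j\rangle\langle j|_{E_L}\otimes\rho^j_{AE_{L,j}}\otimes\sigma^j_{BE_{R,j}}$ and applies GPW fibre-wise, randomizing with $\approx 2^{nI(A;B|E)}$ controlled unitaries; the negligible-disturbance condition holds because the channel acts trivially on the $B$-relevant part of $E$, and the conditional-erasure condition~\eqref{eq:conditions_stronger} follows from the per-fibre decoupling. One must check that the errors from (a) the approximate-Markov structure theorem, (b) the asymptotic equipartition, and (c) the GPW decoupling all vanish, and that the catalytic ancilla $\theta_{A'}$ can absorb any dimension mismatch from the non-exact splitting of $E^n$.

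The main obstacle I anticipate is the converse, specifically handling the factor-of-two slack in the entropy bound $H(\text{output}) - H(\text{input}) \leq 2\log M$ for unital randomizing channels. Getting the sharp constant $1$ in the asymptotic rate requires either GPW's blocking trick or a more refined argument exploiting that the target state is (approximately) a product across the $AA' : BE$ cut, so that the randomization only needs to "smooth" in the conditional sense; converting the fidelity-based conditions \eqref{eq:conditions} into additive entropic statements with $o(n)$ error (via Fawzi--Renner plus asymptotic continuity of conditional entropy, e.g.\ a suitable Alicki--Fannes/Fannes--Audenaert bound) is the technical heart. A secondary difficulty on the achievability side is that the structure theorem for approximate quantum Markov chains does not give an exact tensor-product splitting of $E^n$, so one needs a robust (smoothed) version and must verify that the catalyst soaks up the resulting imperfections without inflating the rate.
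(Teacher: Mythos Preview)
Your overall architecture (converse for $\mathcal{D}$, achievability for $\mathcal{C}$, sandwich with $\mathcal{C}\ge\mathcal{D}$) matches the paper, but both halves of your sketch have real gaps.

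\textbf{Achievability.} Your proposed route via the structure theorem for approximate quantum Markov chains does not work. That decomposition $\rho_{ABE}\approx\sum_j p_j\,|j\rangle\langle j|\otimes\rho^j_{AE_{L,j}}\otimes\sigma^j_{BE_{R,j}}$ exists only when $I(A;B|E)_\rho$ is (close to) zero; for a generic state with strictly positive CQMI there is no such splitting of $E$, exact or approximate, to run ``GPW fibrewise'' on. The paper's achievability is entirely different: it takes the \emph{encoder} of quantum state redistribution (Lemma~\ref{lem:state_redistribution}), which is a unitary $A^nA'E^n\to A_0\bar A_0E^n$ with $\tfrac1n\log|\bar A_0|\to\tfrac12 I(A;B|E)_\rho$, and then post-composes with $|\bar A_0|^2$ Heisenberg--Weyl unitaries on $\bar A_0$ to implement the partial trace. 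This yields $M=|\bar A_0|^2$ unitaries, hence rate $I(A;B|E)_\rho$; the negligible-disturbance and conditional-erasure conditions fall out of the QSR fidelity guarantee by monotonicity. No Markov-structure theorem is used or needed.

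\textbf{Converse.} You have the right endgame (high $F(AA';B|E)_\omega$ forces small $I(AA';B^n|E^n)_\omega$; note this is the \emph{converse} of~\eqref{eq:fawzi_renner}, not Fawzi--Renner itself, and your displayed inequality $I\le -\log F$ is the wrong direction). But your entropy-difference bound leaves the factor-of-$2$ slack that you flag and do not resolve; the GPW ``blocking trick'' does not obviously apply here. The paper removes the factor cleanly: extend the ancilla by $\tau_{A_1'}\otimes\tau_{A_2'}$ with $|A_1'|=|A_2'|=\sqrt{M}$, apply the $U^i$ \emph{controlled} on a Bell basis of $A_1'A_2'$, and trace out $A_2'$. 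This single unitary followed by a partial trace over a $\sqrt{M}$-dimensional system reproduces $\omega\otimes\tau_{A_1'}$, so the QMI dimension bound costs only $2\log|A_2'|=\log M$, not $2\log M$. Combined with unitary invariance of QMI, continuity on $BE$, the chain rule, and multiplicativity of the FoR (to discard the extra $A_1'$), one gets $nI(A;B|E)_\rho\lesssim I(A^nA';B^n|E^n)_\omega+\log M$, and the first term vanishes.
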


Thus, our result assigns a new physical meaning to the CQMI, in terms of an erasure or thermodynamical task that generalizes Landauer's original scenario as well as the erasure of correlations scenario of Groisman, Popescu, and Winter. The CQMI has many properties that are useful for a conditional measure of correlations. Amongst them are the duality property $I(A;B|R)_{\rho}=I(A;B|E)_{\rho}$ for a four party pure state $\rho_{ABER}$ and the chain rule
\begin{align}
I(A_{1}\cdots A_{n};B|E)_\rho=\sum_{i=1}^{n}I(A_{i};B|EA_{1}^{i-1})_\rho\label{eq:chainrule}
\end{align}
for $A_{1}^{i-1}:=A_{1}\cdots A_{i-1}$. The latter means that we can think of the correlations between $A_{1}\cdots A_{n}$ and $B$, as observed by $E$, being built up one system at a time.

We would like to emphasize again that deconstruction and conditional erasure protocols are more delicate than standard decoupling, the latter sometimes described as having the relatively indiscriminate goal of destruction~\cite{ADHW06FQSW}. That is, a straightforward application of the decoupling method is too blunt of a tool to apply in a state deconstruction protocol. Applying it naively would result in the annihilation of correlations such that if correlations between systems $B$ and $E$ were present beforehand, they would be destroyed.

%-----------------------------------------------------------------------------------------------------------------------------------------------------------------------------------------------------------------------------

\paragraph{Previous Work.} Our results are to be contrasted with the previous works of del Rio {\it et al.}~\cite{RARDV10} and Wakakuwa {\it et al.}~\cite{WSM15}. In~\cite{RARDV10} the authors give a conditional version of Landauer's erasure principle by showing that the work cost of {\it resetting} the $A$-part of a tripartite pure state $\rho_{ABR}$ to $\psi_A\otimes\rho_{BR}$ with $\psi_A$ pure, is given by the conditional entropy $H(A|B)_\rho$. There are various differences with our setting, but most importantly, we do not demand for the final state to be pure on $A$, but only that it is deconstructed as in~\eqref{eq:conditions} or decoupled and maximally mixed as in~\eqref{eq:conditions_stronger}. In~\cite{WSM15} the authors give an extension of the Groisman, Popescu, and Winter model~\eqref{eq:groisman1}--\eqref{eq:groisman2} to include a third system $E$. Their model, called {\it Markovianization cost}, is conceptually different from our models~\eqref{eq:unital_randomizing}--\eqref{eq:conditions_stronger} in various aspects: (i) their unitaries only act on $A$ and not on $AE$ (and hence there is no negligible disturbance condition on $BE$) (ii) the resulting state is asked to be close to an exact quantum Markov state~\cite{AF83} (however, see also~\cite{WSM15a}) (iii) there is no catalytic ancilla register. Whereas the converse from Proposition~\ref{prop:CQMI-lower-bnd} holds for their model as well~\cite{WSM15a}, the CQMI cannot be achieved: the different condition (i) accounts for a strictly larger optimal rate function based on the Koashi-Imoto decomposition~\cite{koashi02} (at least for pure states). This proves that the CQMI cannot be achieved without having access to the $E$ system (which is actually even true in the classical case~\cite{BBMW_full16}). Wakakuwa {\it et al.}'s result is motivated from questions in distributed computation~\cite{WSM15_2} but has the disadvantage that the Koashi-Imoto decomposition is not continuous in the state. We consider our models to be the most natural and refer to our companion paper~\cite{BBMW_full16} for an extended discussion.

%-----------------------------------------------------------------------------------------------------------------------------------------------------------------------------------------------------------------------------

\paragraph{Converse.} We only need to prove that the deconstruction cost of tripartite states is lower bounded by its CQMI since we have $\mathcal{C}(A;B|E)_\rho\geq\mathcal{D}(A;B|E)_\rho$. For that we make use of standard entropy inequalities and some properties of the FoR that are similar to the CQMI. In particular, the FoR is self-dual~\cite[Prop.~4]{SW14},
\begin{align}\label{eq:duality-FoR}
F(A;B|E)_{\rho}=F(A;B|R)_{\rho}\quad\text{for $\rho_{ABER}$ pure,}
\end{align}
and multiplicative on tensor-product states~\cite[Prop.~2]{BT15}. 

\begin{prop}\label{prop:CQMI-lower-bnd}
For any tripartite quantum state $\rho_{ABE}$:
\begin{align*}
\mathcal{D}(A;B|E)_\rho\geq I(A;B|E)_{\rho}\,.
\end{align*}
\end{prop}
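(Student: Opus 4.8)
The plan is to adapt the Groisman--Popescu--Winter converse to the conditional setting, tracking how much randomness a deconstruction protocol must consume by means of a classical ``which-unitary'' flag. Fix a protocol for $\rho_{ABE}^{\otimes n}$ with $M$ unitaries $\{U^i_{A^nA'E^n}\}_{i=1}^M$, ancilla $\theta_{A'}$, error $\varepsilon$, and output $\omega_{A^nA'B^nE^n}$ obeying~\eqref{eq:conditions}. Introduce a register $J$ of dimension $M$ recording the label $i$, giving $\omega_{JA^nA'B^nE^n}=\frac1M\sum_i|i\rangle\!\langle i|_J\otimes U^i_{A^nA'E^n}\big(\rho_{ABE}^{\otimes n}\otimes\theta_{A'}\big)\big(U^i_{A^nA'E^n}\big)^\dagger$. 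First I would prove the exact identity $I(A^nA'E^nJ;B^n)_\omega=n\,I(AE;B)_\rho$: since each $U^i$ acts trivially on $B^n$ we have $\omega_{JB^n}=\pi_J\otimes\rho_B^{\otimes n}$, so $I(J;B^n)_\omega=0$, while conditioned on $J=i$ the state is unitarily related on $A^nA'E^n$ to $\rho_{ABE}^{\otimes n}\otimes\theta_{A'}$, for which $I(A^nA'E^n;B^n)=I(A^nE^n;B^n)_{\rho^{\otimes n}}=n\,I(AE;B)_\rho$ because $A'$ is decoupled. A chain rule then yields $n\,I(AE;B)_\rho=I(A^nA'E^n;B^n)_\omega+I(J;B^n|A^nA'E^n)_\omega$, and since $J$ is classical the last term is at most $H(J)_\omega\leq\log M$, so that
\begin{align}\label{eq:plan1}
\log M\ \geq\ n\,I(AE;B)_\rho-I(A^nA'E^n;B^n)_\omega\,.
\end{align}

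The substance of the proof is to show that $I(A^nA'E^n;B^n)_\omega\leq n\,I(E;B)_\rho+n\,\delta(\varepsilon)$ for some $\delta(\varepsilon)\to0$ that does \emph{not} depend on the (a priori unbounded) ancilla dimension $|A'|$. Local recoverability supplies a channel $\mathcal R_{E^n\to A^nA'E^n}$ with $F\big(\omega_{A^nA'B^nE^n},\mathcal R(\omega_{B^nE^n})\big)\geq1-\varepsilon$, hence trace distance $\leq\sqrt\varepsilon$ by Fuchs--van de Graaf. Data processing for the quantum mutual information through $\mathcal R$ gives $I(A^nA'E^n;B^n)_{\mathcal R(\omega_{B^nE^n})}\leq I(E^n;B^n)_\omega$, and the Alicki--Fannes--Winter continuity estimate --- applied with $|B^n|$, rather than $|A^nA'E^n|$, as the small dimension --- bounds the difference between $I(A^nA'E^n;B^n)_\omega$ and $I(A^nA'E^n;B^n)_{\mathcal R(\omega_{B^nE^n})}$ by a multiple of $\sqrt\varepsilon\,n\log|B|$. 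Negligible disturbance and continuity once more give $I(E^n;B^n)_\omega\leq n\,I(E;B)_\rho+O(\sqrt\varepsilon\,n\log|B|)$. Inserting all of this into~\eqref{eq:plan1} produces $\tfrac1n\log M\geq I(AE;B)_\rho-I(E;B)_\rho-O(\sqrt\varepsilon\log|B|)=I(A;B|E)_\rho-O(\sqrt\varepsilon\log|B|)$, and taking $n\to\infty$ then $\varepsilon\to0$ gives $\mathcal D(A;B|E)_\rho\geq I(A;B|E)_\rho$. The recoverability step can alternatively be run directly through the fidelity of recovery, where its self-duality and multiplicativity on tensor powers, together with~\eqref{eq:fawzi_renner}, streamline the bookkeeping.

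I expect the middle paragraph to be the real obstacle. The tempting split $I(A^nA'E^n;B^n)_\omega=I(E^n;B^n)_\omega+I(A^nA';B^n|E^n)_\omega$ invites one to argue the conditional term is small directly from $F(A^nA';B^n|E^n)_\omega\geq1-\varepsilon$; but~\eqref{eq:fawzi_renner} only \emph{lower} bounds a CQMI by $-\log$ of a fidelity of recovery, and a continuity argument against an exact quantum Markov state is fragile because the optimal recovery map is allowed to disturb $E^n$. Routing the estimate instead through data processing on the recovery channel, and arranging that $|B^n|$ is the dimension entering every continuity bound, is what makes the argument robust to the catalytic ancilla; the remainder is routine chain-rule manipulation and unwinding the definitions of $\mathcal D(A;B|E)_\rho$ and $I(A;B|E)_\rho$.
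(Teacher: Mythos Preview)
Your argument is correct and complete; the key point you were worried about---getting $I(A^nA';B^n|E^n)_\omega$ small from $F(A^nA';B^n|E^n)_\omega\geq 1-\varepsilon$ without any dependence on $|A'|$---is handled properly by passing the recovery channel through the \emph{unconditional} mutual information and putting the continuity burden on $B^n$.

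Your route differs from the paper's in two places. First, to account for the randomness, the paper does not use a classical which-unitary flag $J$. Instead it appends a maximally mixed pair $\tau_{A_1'}\otimes\tau_{A_2'}$ of total dimension $M$, applies the $U^i$ \emph{coherently} controlled on a basis of maximally entangled states of $A_1'A_2'$, and notes that tracing out $A_2'$ reproduces the randomizing channel; $\log M$ then enters via the dimension bound on QMI when a subsystem of size $|A_2'|=\sqrt M$ is removed. Multiplicativity of the fidelity of recovery is used only to check that the extra catalytic $A_1'$ does not spoil local recoverability. Your classical-flag chain rule $I(J;B^n|A^nA'E^n)_\omega\leq\log M$ achieves the same inequality more directly and is closer in spirit to the original Groisman--Popescu--Winter converse. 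Second, for the step you flagged, the paper simply quotes a ``converse to Fawzi--Renner'' type statement (their reference to \cite[Prop.~35]{BSW14}) to conclude $I(A^nA';B^n|E^n)_\omega\to 0$; your data-processing plus $B^n$-sided Alicki--Fannes--Winter argument is essentially the proof of that statement, so the two approaches coincide at this point, with yours being self-contained and making the ancilla-independence explicit.
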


\begin{proof}
Given an ancilla state $\theta_{A'}$ and a set of unitaries $\left\{U^i_{AA'E}\right\}_{i=1}^M$ leading to $\omega_{AA'BE}$ as in~\eqref{eq:omega_state}, we define an extended ancilla state $\theta_{A'A'_1A'_2}:=\theta_{A'}\otimes\tau_{A_1'A_2'}$ with each $\tau_{A_i'}$ maximally mixed of dimension $\sqrt{M}$~\footnote{Non-integer $\sqrt{M}$ can easily be taken care of as in~\cite{BBMW_full16}.}, and apply the unitaries $U^i_{AA'E}$ controlled on an orthonormal basis of maximally entangled states of $A_1'A_2'$. When tracing over $A_2'$, the resulting state is given by $\omega_{AA'BE}\otimes\tau_{A_1'}$ with $\omega_{AA'BE}$ from~\eqref{eq:omega_state}. Now, by the multiplicativity of the FoR we have $F(AA'A_1';B|E)_{\omega\otimes\tau}=F(AA';B|E)_\omega$, and hence we find that any lower bound on the size of the system $A_2'$ that has to be traced out in order to fulfill the conditions~\eqref{eq:conditions} for $\omega_{AA'BE}$, automatically gives a lower bound on the number $M$ of unitaries needed. To find a lower bound on $\left|A_2'\right|=\sqrt{M}$, we start with
\begin{align*}
nI(A;B|E)_{\rho}&%=I(A^{n};B^{n}|E^{n})_{\rho^{\otimes n}}\\
%&  =I(A^{n}A'A'_1A'_2;B^{n}|E^{n})_{\rho^{\otimes n}\otimes\theta}\\
=I(A^{n}A'A'_1A'_2E^{n};B^{n})_{\rho^{\otimes n}\otimes\theta}-I(B^{n};E^{n})_{\rho^{\otimes n}}
\end{align*}
which follows because the CQMI is additive with respect to tensor-product states, invariant with respect to tensoring in a product state, and because of the CQMI chain rule~\eqref{eq:chainrule}. Now, we employ that the QMI is invariant with respect to local unitaries and that the QMI is continuous in the sense that
\begin{align*}
-I(B^{n};E^{n})_{\rho^{\otimes n}}\lesssim-I(B^{n};E^{n})_{\omega} \, ,
\end{align*}
with $\lesssim$ denoting an inequality that holds up to terms having order $n\sqrt{\varepsilon}$. From a dimension upper bound on the QMI (see, e.g., \cite{W16}), we then get
\begin{align*}
nI(A;B|E)_\rho\lesssim I(A^nA'A_1'E^{n};B^{n})_{\omega\otimes\tau}+2\log\left\vert A_2'\right\vert\,.
\end{align*}
Again using the additivity of the CQMI with respect to tensor-product states and the CQMI chain rule~\eqref{eq:chainrule}, we find that $I(A^nA'A_1'E^{n};B^{n})_{\omega\otimes\tau}=I(A^nA';B^{n}|E^{n})_{\omega}$. The claim follows by the converse of~\eqref{eq:fawzi_renner}, using, e.g., \cite[Prop.~35]{BSW14}, $F(A^nA';B^{n}|E^{n})_\omega\to1$ implies $I(A^nA';B^{n}|E^{n})_{\omega}\to0$) and by taking the limits $n\rightarrow\infty$ and $\varepsilon\to0$.
\end{proof}

%We note that our converse proof is conceptually different from the prior works~\cite{GPW05,WSM15a}.

%A more detailed version and an alternative proof can be found in our companion paper~\cite{BBMW_full16}.

%-----------------------------------------------------------------------------------------------------------------------------------------------------------------------------------------------------------------------------

\paragraph{Achievability.} We only need to prove that the conditional erasure cost of tripartite states is upper bounded by its CQMI since we have $\mathcal{D}(A;B|E)_\rho\leq\mathcal{C}(A;B|E)_\rho$.

\begin{prop}\label{prop:CQMI-upper-decons}
For any tripartite quantum state $\rho_{ABE}$:
\begin{align*}
\mathcal{C}(A;B|E)_\rho\leq I(A;B|E)_{\rho}\,.
\end{align*}
\end{prop}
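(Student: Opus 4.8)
The plan is to construct, for each block length $n$, a unitary randomizing channel $\Lambda_{A^nA'E^n}$ built from $M_n$ unitaries such that the output state $\omega$ obtained from $\rho_{ABE}^{\otimes n}\otimes\theta_{A'}$ satisfies the strengthened condition~\eqref{eq:conditions_stronger} (hence also~\eqref{eq:conditions}) with error $\varepsilon_n\to0$ while $\tfrac1n\log M_n\to I(A;B|E)_\rho$. Since all conditions are phrased for i.i.d.\ copies, it is enough to (i) exhibit a one-shot protocol whose rate is controlled by a smoothed ``$\max$-type'' one-shot conditional mutual information, and (ii) invoke the asymptotic equipartition property so that the regularization of that one-shot quantity collapses to the single-letter $I(A;B|E)_\rho$. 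Two structural simplifications make this tractable: the catalyst $\theta_{A'}$ should be maximally mixed of dimension chosen so that $\rho_A^{\otimes n}\otimes\theta_{A'}$ is, on its typical subspace, within $\varepsilon$ of a flat state --- this is exactly what upgrades a bare decoupling statement to the requirement~\eqref{eq:conditions_stronger}, i.e.\ it forces $\omega_{AA'}\approx\pi_{AA'}$; and it suffices to drive $\omega$ close to the \emph{single} target state $\pi_{AA'}\otimes\rho_{BE}^{\otimes n}$, since this state has maximally mixed $AA'$, has the correct $B^nE^n$-marginal, and is an exact quantum Markov chain conditioned on $E^n$ (indeed $AA'$ is in tensor product, so $F(AA';B^n|E^n)=1$ via the recovery map that adjoins $\pi_{AA'}$), whence closeness to it implies all three conditions at once.

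The core of the construction is a \emph{conditional (catalytic) decoupling} step, the natural conditional analogue of the Groisman--Popescu--Winter protocol~\cite{GPW05} and of catalytic decoupling~\cite{MBDRC16}: rather than injecting $I(A;B)$ bits of unitary randomness into $AA'$ to decouple $A$ from $B$, we inject roughly $I(A;B|E)$ bits into $AA'$ (coherently assisted by $E$) so as to make $AA'$ decoupled from $B^n$ \emph{conditioned on} $E^n$. When $E$ is classical this is transparent: apply the Groisman--Popescu--Winter unitaries controlled on the value of $E$, running, for each value $e$, a decoupling of $A$ against $B$ in $\rho^e_{AB}$; for the typical composition the randomness needed satisfies $\tfrac1n\log M_n\to\sum_e p_e\,I(A;B)_{\rho^e}=I(A;B|E)_\rho$, and because the unitaries commute with the $E$-basis the marginal $\rho_{BE}^{\otimes n}$ is left untouched, so negligible disturbance is automatic. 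For quantum $E$ one cannot classically read $E$ without disturbing it, so the protocol must act with ``greater sleight of hand'': I would run the controlled scramble coherently --- using the catalyst $A'$ (split as in the converse's $A_1'A_2'$ device) so that the scramble on $AA'$ is controlled on an ancilla proxy rather than on $E^n$ directly, with $E^n$ itself only ever touched reversibly. The decoupling is then certified with the decoupling theorem / convex-split lemma (integrating the $2$-design on $AA'$), which converts ``$\omega$ is $\varepsilon$-close to $\pi_{AA'}\otimes\rho_{BE}^{\otimes n}$'' into the rate threshold $\log M_n\gtrsim$ (the one-shot conditional mutual information), while negligible disturbance $F(\omega_{B^nE^n},\rho_{BE}^{\otimes n})\geq1-\varepsilon$ follows from the reversibility of everything applied to $E^n$; Proposition~\ref{prop:CQMI-upper-decons} is then immediate from $\mathcal{D}(A;B|E)_\rho\leq\mathcal{C}(A;B|E)_\rho$, the AEP, and the limits $n\to\infty$, $\varepsilon\to0$.

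I expect the genuine obstacle to be exactly the tension the text flags: any decoupling argument strong enough to send $F(AA';B|E)_\omega\to1$ --- equivalently, by the converse of~\eqref{eq:fawzi_renner}, $I(AA';B|E)_\omega\to0$ --- will, if applied bluntly (e.g.\ decoupling $A$ from $B^nE^n$ outright), either cost $I(A;BE)_\rho>I(A;B|E)_\rho$ or destroy the correlations in $\rho_{BE}$. The coherent ``build an $E$-proxy on the catalyst, scramble controlled on it, undo'' structure must therefore be engineered so that (a) its action on $E^n$ really is reversed up to $O(\sqrt\varepsilon)$, delivering negligible disturbance, and (b) it does not overspend --- the savings relative to the blunt protocol must be precisely $I(A;E)_\rho$, realizing the identity $I(A;B|E)=I(A;BE)-I(A;E)$ operationally and landing the rate on $I(A;B|E)_\rho$ rather than on a larger Koashi--Imoto-type functional (cf.\ the discussion of Wakakuwa \emph{et al.}~\cite{WSM15}), and in particular doing so with a construction that is state-independent enough to avoid the discontinuity problems of that approach. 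Controlling the propagation of the decoupling error through the coherent pre- and post-processing, and pinning the rate bookkeeping to exactly $I(A;B|E)_\rho$, is where the real work lies; the technical details I would follow the companion paper~\cite{BBMW_full16}.
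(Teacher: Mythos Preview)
Your proposal sketches a direct, from-scratch construction of a conditional decoupling protocol via a ``coherent $E$-proxy'' and convex-split/decoupling-theorem machinery. The paper takes a much shorter route that you have missed entirely: it reduces conditional erasure to \emph{quantum state redistribution} (QSR), invoking Lemma~\ref{lem:state_redistribution} as a black box. Concretely, the paper picks the QSR encoder $\mathrm{Enc}_{A^nA'E^n\to A_0\bar{A}_0E^n}$ itself as the single unitary $\mathcal{V}$, takes the catalyst $\theta_{A'}$ to be one half of the maximally entangled QSR resource $\Phi_{A'R'}$, and then composes $\mathcal{V}$ with $|\bar{A}_0|^2$ Heisenberg--Weyl unitaries acting on $\bar{A}_0$ to simulate the partial trace over the communication register. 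This gives $M=|\bar{A}_0|^2$, hence rate $\tfrac1n\log M\to 2\cdot\tfrac12 I(A;B|E)_\rho=I(A;B|E)_\rho$. Both conditions in~\eqref{eq:cond} then follow in one line from the QSR fidelity guarantee by monotonicity of the fidelity: in the QSR target state $\rho_{ABER}^{\otimes n}\otimes\Phi_{A_0R_0}$, the register $A_0$ is maximally mixed and in tensor product with $B^nE^n$, and the $B^nE^n$ marginal is exactly $\rho_{BE}^{\otimes n}$.

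Your route is not wrong in principle --- convex-split based conditional decoupling can be made to work --- but as written it is a plan rather than a proof: you correctly identify the obstacle (scrambling $A$ from $B$ without disturbing $BE$ and without paying $I(A;BE)$) and then defer its resolution to~\cite{BBMW_full16}. The missing idea is precisely that QSR has already solved this obstacle. Its encoder acts unitarily on $A^nA'E^n$, its communication cost is $\tfrac12 I(A;B|E)$, and its correctness guarantee immediately delivers both decoupling of $A_0$ from $B^nE^n$ and negligible disturbance of $\rho_{BE}^{\otimes n}$. All of the ``build an $E$-proxy, scramble, undo'' engineering you anticipate is subsumed in the existing QSR achievability proof, so the letter-length argument never has to rebuild it.
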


We will make crucial use of a previously established operational interpretation of the CQMI in terms of {\it quantum state redistribution (QSR)}~\cite{DY08}. A QSR protocol begins with a sender, a receiver, and a reference party sharing many independent copies of a four party pure state $\rho_{ABER}$. The sender has $AE$, the receiver $R$, and the reference party $B$. The goal is to use noiseless quantum communication and entanglement assistance to redistribute the systems such that the sender ends up with $E$, the receiver with $AR$, and the reference keeps $B$. We will need the following key lemma from the follow-up work~\cite{PhysRevA.78.030302}, which shows that QSR is asymptotically achievable for a quantum communication rate of $\frac{1}{2}I(A;B|E)_\rho$, using entanglement assistance and a unitary encoder and decoder.

\begin{lemma}\cite[Thm.~3]{PhysRevA.78.030302}\label{lem:state_redistribution}
For every four party pure state $\rho_{ABER}$ there exist unitary operations $\mathrm{Enc}:A^nA'E^n\to A_0\bar{A}_0E^n$ and $\mathrm{Dec}:\bar{A}_0R^nR'\to A^nR_0R^n$ such that for $n\to\infty$ and maximally entangled states $\Phi_{A'R'}$ and $\Phi_{A_0R_0}$ of appropriate dimension,
\begin{align*}
F\left(\mathrm{Dec}\circ\mathrm{Enc}\left(\rho_{ABER}^{\otimes n}\otimes\Phi_{A'R'}\right),\rho_{ABER}^{\otimes n}\otimes\Phi_{A_0R_0}\right)\to1\,,
\end{align*}
with quantum communication $\frac{1}{n}\log|\bar{A}_0|\to\frac{1}{2}I(A;B|E)_\rho$.
\end{lemma}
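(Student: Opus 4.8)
The plan is to prove Lemma~\ref{lem:state_redistribution} by a decoupling‑plus‑Uhlmann construction, in which the conditioning system $E$ plays the essential role. First I would pass to $n$ copies and write the initial global \emph{pure} state on $A^nB^nE^nR^n$ together with $\Phi_{A'R'}$, with the sender holding $A^nA'E^n$, the receiver holding $R^nR'$, and the reference holding the untouched $B^n$. For any encoding unitary $\mathrm{Enc}\colon A^nA'E^n\to A_0\bar A_0E^n$ the resulting global state is again pure, and I would invoke Uhlmann's theorem to observe that a decoding unitary $\mathrm{Dec}\colon\bar A_0R^nR'\to A^nR_0R^n$ achieving $F\to1$ with the target $\rho_{ABER}^{\otimes n}\otimes\Phi_{A_0R_0}$ exists provided that, after $\mathrm{Enc}$, the reduced state on the systems left untouched by $\mathrm{Dec}$ — namely $A_0B^nE^n$ — is close to $\pi_{A_0}\otimes\rho_{BE}^{\otimes n}$. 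Indeed, the post‑$\mathrm{Enc}$ state and the target then both purify (approximately) the same state on $A_0B^nE^n$, with purifying systems $\bar A_0R^nR'$ and $A^nR^nR_0$ respectively, so Uhlmann supplies an isometry between them; padded with local ancillas of the appropriate dimension it is the required $\mathrm{Dec}$. It thus remains to build $\mathrm{Enc}$ so that the sent register $\bar A_0$ carries off the $A^n$–$B^n$ correlations conditioned on $E^n$ while $A_0$ becomes a fresh near‑maximally‑mixed register, at the rate $\tfrac1n\log|\bar A_0|\to\tfrac12 I(A;B|E)_\rho$.

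Next I would construct $\mathrm{Enc}$. A Haar‑random unitary applied to all of $A^nA'$ would overshoot, since it would also decouple $A_0$ from $E^n$ and ignore that the sender \emph{holds} $E^n$, leading to the larger cost $\tfrac12 I(A;BE)_\rho\ge\tfrac12 I(A;B|E)_\rho$. The remedy uses $E^n$ as side information in two steps composed into a single unitary: (i) coherently project $A^n$ onto its subspace typical conditioned on $E^n$, which effectively replaces $A^n$ by a register of $\approx nH(A|E)_\rho$ qubits carrying a state essentially maximally mixed given $E^n$, carried out with smoothed conditional type projectors so that the $B^nE^n$ marginal is left essentially undisturbed; (ii) apply, on this compressed register together with $A'$, a unitary drawn from a $2$‑design and split off the block $\bar A_0$. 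The decoupling theorem, applied with reference $B^n$ and relative to the side information $E^n$ — equivalently, a fully‑quantum Slepian--Wolf (``mother''‑type) decoupling run inside each conditional type class — then guarantees that the complementary block $A_0$ is close to $\pi_{A_0}$ and decoupled from $B^nE^n$ as soon as $\tfrac1n\log|\bar A_0|>\tfrac12\bigl(H(A|E)_\rho-H(A|BE)_\rho\bigr)=\tfrac12 I(A;B|E)_\rho$; since the decoupling bound uses only first and second moments of the scrambling unitary, the $2$‑design suffices.

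For the rate I would anchor the computation on the identity $I(A;B|E)_\rho=I(A;BR)_\rho-I(A;R)_\rho$, valid for any purification $\rho_{ABER}$ and immediate from $H(BR)=H(AE)$, $H(ABR)=H(E)$, $H(R)=H(ABE)$, $H(AR)=H(BE)$: the $\tfrac12 I(A;BR)_\rho$ contribution is the cost of transferring $A$ toward Bob‑and‑reference, and $-\tfrac12 I(A;R)_\rho$ is entanglement that is recycled, so the net quantum communication is $\tfrac12 I(A;B|E)_\rho$; the consumed and generated maximally entangled registers $\Phi_{A'R'}$ and $\Phi_{A_0R_0}$ carry the dimensions dictated by the decoupling, their precise sizes being immaterial to the statement (absorbed in ``of appropriate dimension''). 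Letting $n\to\infty$, the conditional‑typicality error in step~(i), the deviation in the decoupling bound in step~(ii), and the Uhlmann approximation all vanish, yielding $F\to1$ and $\tfrac1n\log|\bar A_0|\to\tfrac12 I(A;B|E)_\rho$.

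The hard part will be step~(i): making $\mathrm{Enc}$ a single unitary that both decouples at the sharp rate and leaves the $B^nE^n$ marginal essentially intact — this is exactly where quantum side information is more delicate than classical side information, since a crude $E^n$‑controlled projection would dephase the $E^n$‑coherences visible to $B^n$ and thereby disturb the very correlations the protocol must preserve; stability of the $B^nE^n$ marginal has to be checked directly using the smoothness of $\rho_E^{\otimes n}$ on its conditional‑typical support. A more economical alternative, if one is willing to take the achievability of quantum state redistribution at quantum‑communication rate $\tfrac12 I(A;B|E)_\rho$ as a black box~\cite{DY08}, is to \emph{coherentify} that protocol: make each classical message coherent by copying it into an ancilla and then, in the i.i.d.\ limit, recycle the leftover ancillas and entanglement into $\Phi_{A_0R_0}$ at a sublinear communication overhead, which turns both encoder and decoder into unitaries with the same leading‑order rate.
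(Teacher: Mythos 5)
The paper does not prove Lemma~\ref{lem:state_redistribution} at all: it is imported verbatim from the quantum state redistribution literature \cite{DY08,PhysRevA.78.030302}, so there is no in-paper argument to compare yours against. Judged on its own, your outline has the right architecture. The Uhlmann reduction is exactly right (a unitary $\mathrm{Dec}$ on $\bar A_0R^nR'$ achieving $F\to1$ exists precisely when the post-encoding marginal on the untouched systems $A_0B^nE^n$ is close to $\pi_{A_0}\otimes\rho_{BE}^{\otimes n}$), the identity $I(A;B|E)_\rho=I(A;BR)_\rho-I(A;R)_\rho$ for a four-party pure state is correct, and you correctly diagnose why a random unitary on $A^nA'$ alone overshoots to $\tfrac{1}{2}I(A;BE)_\rho$.

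The gap is in step (i), and you only half-acknowledge it: ``coherently project $A^n$ onto its subspace typical conditioned on $E^n$'' is not an operation that exists when the side information $E^n$ is quantum. Conditional type and conditional typical projectors are defined relative to a classical conditioning system (a fixed basis of $E$); conditioning on a genuinely quantum $E^n$ in this way amounts to a measurement of $E^n$ that dephases it and destroys the very $B^nE^n$ correlations the lemma must preserve --- the same uncertainty-principle obstruction the paper itself highlights when motivating its negligible-disturbance condition. Saying that stability of the $B^nE^n$ marginal ``has to be checked directly'' does not resolve this; as stated, the construction fails before one gets to the decoupling step. The actual proofs avoid this move entirely: Devetak--Yard obtain the rate $\tfrac{1}{2}I(A;B|E)_\rho$ by composing two fully-quantum Slepian--Wolf/merging protocols (first detaching $E$, then recombining, with entanglement recycling accounting for the $-\tfrac{1}{2}I(A;R)_\rho$ term), and later decoupling-based proofs work with smooth conditional min-entropies $H_{\min}(A|E)$ rather than conditional typical subspaces. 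Your fallback --- take the \cite{DY08} achievability as a black box and coherentify/recycle ancillas to make the encoder and decoder unitary --- is legitimate and is essentially what the cited Theorem~3 of \cite{PhysRevA.78.030302} establishes, but your primary construction does not go through as written.
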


We can now prove Proposition~\ref{prop:CQMI-upper-decons} by using the QSR encoder to construct the unitary randomizing channel~\eqref{eq:unital_randomizing}.
%and the QSR decoder to construct the necessary recovery map for~\eqref{eq:conditions}.

%\begin{figure}[ptb]
%\begin{center}
%\includegraphics[natheight=4.253100in,natwidth=5.666200in,height=2.514in,width=3.3399in]{QSR.png}
%\end{center}
%\caption{Quantum state redistribution with a unitary encoding and decoding. By using shared entanglement in systems $F_{1}$ and $F_{2}$ and noiseless quantum communication of the system $G$, a sender can transfer her quantum systems $A^{n}$ to a receiver, such that the resulting state of systems $\hat{A} ^{n}B^{n}\hat{R}^{n}\hat{E}^{n}$ has arbitrarily high fidelity with the initial state of systems $A^{n}B^{n}R^{n}E^{n}$. At the same time, the protocol generates entanglement in the registers $A_{1}^{\prime}$ and $B_{1}^{\prime}$.}
%\label{fig:qsr}
%\end{figure}

\begin{proof}[Proof of Proposition~\ref{prop:CQMI-upper-decons}.]
Let $\rho_{ABER}$ be a purification of $\rho_{ABE}$. We will show that there exists an ancilla register $\theta_{A'}$ with purification $\theta_{A'R'}$ and a unitary operation $\mathcal{V}_{A^nA'E^n\to A_0\bar{A}_0E^n}$ with $A^nA'\cong A_0\bar{A}_0$ such that for the resulting state
\begin{align}\label{eq:omega_new}
\omega_{A_0\bar{A}_0B^nE^nR^nR'}:=\mathcal{V}_{A^nA'E^n\to A_0\bar{A}_0E^n}\left(\rho_{ABER}^{\otimes n}\otimes\theta_{A'R'}\right)
\end{align}
we have in the limit $n\to\infty$,
\begin{align}\label{eq:cond}
F(\omega_{A_0B^nE^n},\pi_{A_0}\otimes\omega_{B^nE^n})\to1\;\mathrm{and}\;F\left(\omega_{B^nE^n},\rho_{BE}^{\otimes n}\right)\to1\,,
\end{align}
for the choice $\frac{1}{n}\log\left|\bar{A}_0\right|\to\frac{1}{2}I(A;B|E)_\rho$. From this we can pick the unitaries
\begin{align*}
U^i_{A^nA'E^n\to A_0\bar{A}_0E^n}:=W^i_{A_0\bar{A}_0E^n}V_{A^nA'E^n\to A_0\bar{A}_0E^n}\,,
\end{align*}
with $\{W^i_{A_0\bar{A}_0E^n}\}_{i=1}^{|\bar{A}_0|^2}$ a set of Heisenberg-Weyl unitaries that realize the partial trace over $\bar{A}_0$, and $V_{A^nA'E^n\to A_0\bar{A}_0E^n}$ implementing $\mathcal{V}_{A^nA'E^n\to A_0\bar{A}_0E^n}$. The set of unitaries
\begin{align*}
\text{$\{U^i_{A^nA'E^n\to A_0\bar{A}_0E^n}\}_{i=1}^M$ with $M=\left|\bar{A}_0\right|^2$}
\end{align*}
then defines a unitary randomizing channel $\Lambda_{A^nA'E^n\to A_0\bar{A}_0E^n}$ as in~\eqref{eq:unital_randomizing}, with the property
\begin{align*}
\Lambda_{A^nA'E^n\to A_0\bar{A}_0E^n}(\rho_{ABE}^{\otimes n}\otimes\theta_{A'})=\omega_{A_0B^nE^n}\otimes\tau_{\bar{A}_0}\,,
\end{align*}
and $\omega_{A_0B^nE^n}$ from~\eqref{eq:omega_new}. With~\eqref{eq:cond}, this implies the claim. Now, for $\mathcal{V}_{A^nA'E^n\to A_0\bar{A}_0E^n}$ we pick the QSR encoder for $\rho_{ABER}$ from Lemma~\ref{lem:state_redistribution},
\begin{align*}
\mathcal{V}_{A^nA'E^n\to A_0\bar{A}_0E^n}:=\mathrm{Enc}_{A^nA'E^n\to A_0\bar{A}_0E^n}\,,
\end{align*}
and furthermore we set $\theta_{A'R'}:=\Phi_{A'R'}$ maximally entangled. By Lemma~\ref{lem:state_redistribution} and the monotonicity of the fidelity under quantum operations we have $F\left(\omega_{B^nE^n},\rho_{BE}^{\otimes n}\right)\to1$. By the same monotonicity and the triangle inequality for any fidelity based metric, Lemma~\ref{lem:state_redistribution} implies $F(\omega_{A_0B^nE^n},\pi_{A_0}\otimes\omega_{B^nE^n})\to1$.
\end{proof}

%-----------------------------------------------------------------------------------------------------------------------------------------------------------------------------------------------------------------------------

\paragraph{Discussion.} The converse bound in Proposition~\ref{prop:CQMI-lower-bnd} together with the achievability bound in Proposition~\ref{prop:CQMI-upper-decons} provide a proof of our main result (Theorem~\ref{thm:CQMI-upper-decons}). This establishes the CQMI as an operational measure for the correlations between $A$ and $B$ from the perspective of $E$. Our result can alternatively be read as a conditional decoupling theorem and hence provides a conceptually new extension of the decoupling approach to quantum information theory. The power of decoupling lies in a fundamental monogamy of entanglement type duality that allows to retrieve quantum information from a purifying reference system if and only if it is decoupled~\cite{qip2002schu,horodecki05,qcap2008first,dupuis10,dupuis09}. In that sense, just as Groisman {\it et al.}'s destruction of bipartite correlations is dual to coherent quantum state merging~\cite{ADHW06FQSW,bertachristandl11,MBDRC16}, in our case we can make use of QSR, and in our companion paper~\cite{BBMW_full16}, we even show that the task of conditional erasure is equivalent to QSR. We emphasize that our negligible disturbance condition (ii) is exactly crucial for this duality to work in the tripartite setting.

More generally, the decoupling technique has numerous applications in areas as different as cryptography~\cite{Berta14}, quantum thermodynamics~\cite{RARDV10,Aberg13}, black hole radiation~\cite{HayPre07,braunstein-pati,braunstein-zyczkowski}, or many body quantum physics~\cite{Brandao12}. Hence, we expect our setting of conditional decoupling to have many more applications. In particular, since the CQMI serves as a measure for topological order~\cite{PK06,LW06,kim13}, it would be interesting to further explore this connection in terms of our findings. Another interesting avenue to explore on the information theory side is the connection of our conditional decoupling models to channel resolvability and wiretap channels (see, e.g., \cite{Hayashi2015} and \cite[Sect.~9.4 \& 9.5]{Hayashi2017}). Finally, the CQMI is also the basis of the correlation measures squashed entanglement~\cite{CW04} and quantum discord~\cite{zurek01}, and hence our result has immediate consequences for the study of these quantities. We discuss this in our companion paper~\cite{BBMW_full16}.

%-----------------------------------------------------------------------------------------------------------------------------------------------------------------------------------------------------------------------------

\paragraph{Conclusion.} We presented new operational interpretations of the CQMI as the deconstruction and conditional erasure cost of tripartite quantum states. Concerning open questions we would like to understand if the use of the catalytic ancillary register $A'$ is strictly necessary for achieving the CQMI. In our companion paper~\cite{BBMW_full16}, we show that for conditional erasure, our achievability result with a maximally mixed register $A'$ of rate
\begin{align*}
\frac{1}{n}\log|A'|\to\max\left\{\frac{1}{2}I(A:E)_\rho-\frac{1}{2}I(A:R)_\rho,0\right\}&\\
\text{for $\rho_{ABER}$ pure}&\,,
\end{align*}
is also optimal. However, for achieving the CQMI in state deconstruction only, the ancilla register might not be needed at all. We note that in the special case of Groisman {\it et al.}'s model~\eqref{eq:groisman1}--\eqref{eq:groisman2}, the ancilla register $A'$ is not needed in the asymptotic limit, but it seems to be useful for deriving tight one-shot bounds~\cite{MBDRC16}.

%-----------------------------------------------------------------------------------------------------------------------------------------------------------------------------------------------------------------------------

\begin{acknowledgments}
We are indebted to G.~Gour, M.~Hastings, M.~Piani, S.~Das, M.~Murao, K.~Seshadreesan, E.~Wakakuwa, and A.~Winter for valuable discussions. We acknowledge the catalyzing role of the open problems session at Beyond IID\ 2016, which ultimately led to the results presented here. CM acknowledges financial support from the European Research Council (ERC Grant Agreement no 337603), the Danish Council for Independent Research (Sapere Aude) and VILLUM FONDEN via the QMATH Centre of Excellence (Grant No. 10059). MMW\ acknowledges support from the NSF\ under Award no.~1714215.
\end{acknowledgments}

%-----------------------------------------------------------------------------------------------------------------------------------------------------------------------------------------------------------------------------

\bibliography{ref}

%merlin.mbs apsrev4-1.bst 2010-07-25 4.21a (PWD, AO, DPC) hacked
%Control: key (0)
%Control: author (8) initials jnrlst
%Control: editor formatted (1) identically to author
%Control: production of article title (-1) disabled
%Control: page (0) single
%Control: year (1) truncated
%Control: production of eprint (0) enabled
\begin{thebibliography}{55}%
\makeatletter
\providecommand \@ifxundefined [1]{%
 \@ifx{#1\undefined}
}%
\providecommand \@ifnum [1]{%
 \ifnum #1\expandafter \@firstoftwo
 \else \expandafter \@secondoftwo
 \fi
}%
\providecommand \@ifx [1]{%
 \ifx #1\expandafter \@firstoftwo
 \else \expandafter \@secondoftwo
 \fi
}%
\providecommand \natexlab [1]{#1}%
\providecommand \enquote  [1]{``#1''}%
\providecommand \bibnamefont  [1]{#1}%
\providecommand \bibfnamefont [1]{#1}%
\providecommand \citenamefont [1]{#1}%
\providecommand \href@noop [0]{\@secondoftwo}%
\providecommand \href [0]{\begingroup \@sanitize@url \@href}%
\providecommand \@href[1]{\@@startlink{#1}\@@href}%
\providecommand \@@href[1]{\endgroup#1\@@endlink}%
\providecommand \@sanitize@url [0]{\catcode `\\12\catcode `\$12\catcode
  `\&12\catcode `\#12\catcode `\^12\catcode `\_12\catcode `\%12\relax}%
\providecommand \@@startlink[1]{}%
\providecommand \@@endlink[0]{}%
\providecommand \url  [0]{\begingroup\@sanitize@url \@url }%
\providecommand \@url [1]{\endgroup\@href {#1}{\urlprefix }}%
\providecommand \urlprefix  [0]{URL }%
\providecommand \Eprint [0]{\href }%
\providecommand \doibase [0]{http://dx.doi.org/}%
\providecommand \selectlanguage [0]{\@gobble}%
\providecommand \bibinfo  [0]{\@secondoftwo}%
\providecommand \bibfield  [0]{\@secondoftwo}%
\providecommand \translation [1]{[#1]}%
\providecommand \BibitemOpen [0]{}%
\providecommand \bibitemStop [0]{}%
\providecommand \bibitemNoStop [0]{.\EOS\space}%
\providecommand \EOS [0]{\spacefactor3000\relax}%
\providecommand \BibitemShut  [1]{\csname bibitem#1\endcsname}%
\let\auto@bib@innerbib\@empty
%</preamble>
\bibitem [{\citenamefont {Landauer}(1961)}]{L61}%
  \BibitemOpen
  \bibfield  {author} {\bibinfo {author} {\bibfnamefont {R.}~\bibnamefont
  {Landauer}},\ }\href {\doibase 10.1147/rd.53.0183} {\bibfield  {journal}
  {\bibinfo  {journal} {IBM Journal of Research and Development}\ }\textbf
  {\bibinfo {volume} {5}},\ \bibinfo {pages} {183} (\bibinfo {year}
  {1961})}\BibitemShut {NoStop}%
\bibitem [{\citenamefont {Groisman}\ \emph {et~al.}(2005)\citenamefont
  {Groisman}, \citenamefont {Popescu},\ and\ \citenamefont {Winter}}]{GPW05}%
  \BibitemOpen
  \bibfield  {author} {\bibinfo {author} {\bibfnamefont {B.}~\bibnamefont
  {Groisman}}, \bibinfo {author} {\bibfnamefont {S.}~\bibnamefont {Popescu}}, \
  and\ \bibinfo {author} {\bibfnamefont {A.}~\bibnamefont {Winter}},\ }\href
  {\doibase 10.1103/PhysRevA.72.032317} {\bibfield  {journal} {\bibinfo
  {journal} {Physical Review A}\ }\textbf {\bibinfo {volume} {72}},\ \bibinfo
  {pages} {032317} (\bibinfo {year} {2005})}\BibitemShut {NoStop}%
\bibitem [{\citenamefont {Faist}\ \emph {et~al.}(2015)\citenamefont {Faist},
  \citenamefont {Dupuis}, \citenamefont {Oppenheim},\ and\ \citenamefont
  {Renner}}]{faist2015minimal}%
  \BibitemOpen
  \bibfield  {author} {\bibinfo {author} {\bibfnamefont {P.}~\bibnamefont
  {Faist}}, \bibinfo {author} {\bibfnamefont {F.}~\bibnamefont {Dupuis}},
  \bibinfo {author} {\bibfnamefont {J.}~\bibnamefont {Oppenheim}}, \ and\
  \bibinfo {author} {\bibfnamefont {R.}~\bibnamefont {Renner}},\ }\href@noop {}
  {\bibfield  {journal} {\bibinfo  {journal} {Nature Communications}\ }\textbf
  {\bibinfo {volume} {6}} (\bibinfo {year} {2015})}\BibitemShut {NoStop}%
\bibitem [{\citenamefont {Maruyama}\ \emph {et~al.}(2009)\citenamefont
  {Maruyama}, \citenamefont {Nori},\ and\ \citenamefont
  {Vedral}}]{Maruyama2009}%
  \BibitemOpen
  \bibfield  {author} {\bibinfo {author} {\bibfnamefont {K.}~\bibnamefont
  {Maruyama}}, \bibinfo {author} {\bibfnamefont {F.}~\bibnamefont {Nori}}, \
  and\ \bibinfo {author} {\bibfnamefont {V.}~\bibnamefont {Vedral}},\
  }\href@noop {} {\bibfield  {journal} {\bibinfo  {journal} {Reviews of Modern
  Physics}\ }\textbf {\bibinfo {volume} {81}},\ \bibinfo {pages} {1} (\bibinfo
  {year} {2009})}\BibitemShut {NoStop}%
\bibitem [{\citenamefont {Plenio}\ and\ \citenamefont
  {Vitelli}(2001)}]{plenio2001physics}%
  \BibitemOpen
  \bibfield  {author} {\bibinfo {author} {\bibfnamefont {M.~B.}\ \bibnamefont
  {Plenio}}\ and\ \bibinfo {author} {\bibfnamefont {V.}~\bibnamefont
  {Vitelli}},\ }\href@noop {} {\bibfield  {journal} {\bibinfo  {journal}
  {Contemporary Physics}\ }\textbf {\bibinfo {volume} {42}},\ \bibinfo {pages}
  {25} (\bibinfo {year} {2001})}\BibitemShut {NoStop}%
\bibitem [{\citenamefont {B{\'e}rut}\ \emph {et~al.}(2012)\citenamefont
  {B{\'e}rut}, \citenamefont {Arakelyan}, \citenamefont {Petrosyan},
  \citenamefont {Ciliberto}, \citenamefont {Dillenschneider},\ and\
  \citenamefont {Lutz}}]{berut2012experimental}%
  \BibitemOpen
  \bibfield  {author} {\bibinfo {author} {\bibfnamefont {A.}~\bibnamefont
  {B{\'e}rut}}, \bibinfo {author} {\bibfnamefont {A.}~\bibnamefont
  {Arakelyan}}, \bibinfo {author} {\bibfnamefont {A.}~\bibnamefont
  {Petrosyan}}, \bibinfo {author} {\bibfnamefont {S.}~\bibnamefont
  {Ciliberto}}, \bibinfo {author} {\bibfnamefont {R.}~\bibnamefont
  {Dillenschneider}}, \ and\ \bibinfo {author} {\bibfnamefont {E.}~\bibnamefont
  {Lutz}},\ }\href@noop {} {\bibfield  {journal} {\bibinfo  {journal} {Nature}\
  }\textbf {\bibinfo {volume} {483}},\ \bibinfo {pages} {187} (\bibinfo {year}
  {2012})}\BibitemShut {NoStop}%
\bibitem [{Note1()}]{Note1}%
  \BibitemOpen
  \bibinfo {note} {Groisman, Popescu, and Winter discuss various models of how
  to inject noise into the system; however, ultimately all of them become
  equivalent.}\BibitemShut {Stop}%
\bibitem [{\citenamefont {Majenz}\ \emph {et~al.}(2017)\citenamefont {Majenz},
  \citenamefont {Berta}, \citenamefont {Dupuis}, \citenamefont {Renner},\ and\
  \citenamefont {Christandl}}]{MBDRC16}%
  \BibitemOpen
  \bibfield  {author} {\bibinfo {author} {\bibfnamefont {C.}~\bibnamefont
  {Majenz}}, \bibinfo {author} {\bibfnamefont {M.}~\bibnamefont {Berta}},
  \bibinfo {author} {\bibfnamefont {F.}~\bibnamefont {Dupuis}}, \bibinfo
  {author} {\bibfnamefont {R.}~\bibnamefont {Renner}}, \ and\ \bibinfo {author}
  {\bibfnamefont {M.}~\bibnamefont {Christandl}},\ }\href {\doibase
  10.1103/PhysRevLett.118.080503} {\bibfield  {journal} {\bibinfo  {journal}
  {Physical Review Letters}\ }\textbf {\bibinfo {volume} {118}},\ \bibinfo
  {pages} {080503} (\bibinfo {year} {2017})}\BibitemShut {NoStop}%
\bibitem [{\citenamefont {Hayden}\ and\ \citenamefont
  {Preskill}(2007)}]{HayPre07}%
  \BibitemOpen
  \bibfield  {author} {\bibinfo {author} {\bibfnamefont {P.}~\bibnamefont
  {Hayden}}\ and\ \bibinfo {author} {\bibfnamefont {J.}~\bibnamefont
  {Preskill}},\ }\href@noop {} {\bibfield  {journal} {\bibinfo  {journal}
  {Journal of High Energy Physics}\ }\textbf {\bibinfo {volume} {07}},\
  \bibinfo {pages} {120} (\bibinfo {year} {2007})}\BibitemShut {NoStop}%
\bibitem [{\citenamefont {Braunstein}\ and\ \citenamefont
  {Pati}(2007)}]{braunstein-pati}%
  \BibitemOpen
  \bibfield  {author} {\bibinfo {author} {\bibfnamefont {S.~L.}\ \bibnamefont
  {Braunstein}}\ and\ \bibinfo {author} {\bibfnamefont {A.~K.}\ \bibnamefont
  {Pati}},\ }\href@noop {} {\bibfield  {journal} {\bibinfo  {journal} {Physical
  Review Letters}\ }\textbf {\bibinfo {volume} {98}},\ \bibinfo {pages}
  {080502} (\bibinfo {year} {2007})}\BibitemShut {NoStop}%
\bibitem [{\citenamefont {Braunstein}\ \emph {et~al.}(2013)\citenamefont
  {Braunstein}, \citenamefont {Pirandola},\ and\ \citenamefont
  {Zyczkowski}}]{braunstein-zyczkowski}%
  \BibitemOpen
  \bibfield  {author} {\bibinfo {author} {\bibfnamefont {S.~L.}\ \bibnamefont
  {Braunstein}}, \bibinfo {author} {\bibfnamefont {S.}~\bibnamefont
  {Pirandola}}, \ and\ \bibinfo {author} {\bibfnamefont {K.}~\bibnamefont
  {Zyczkowski}},\ }\href@noop {} {\bibfield  {journal} {\bibinfo  {journal}
  {Physical Review Letters}\ }\textbf {\bibinfo {volume} {110}},\ \bibinfo
  {pages} {101301} (\bibinfo {year} {2013})}\BibitemShut {NoStop}%
\bibitem [{\citenamefont {Brandao}\ and\ \citenamefont
  {Horodecki}(2013)}]{Brandao12}%
  \BibitemOpen
  \bibfield  {author} {\bibinfo {author} {\bibfnamefont {F.~G. S.~L.}\
  \bibnamefont {Brandao}}\ and\ \bibinfo {author} {\bibfnamefont
  {M.}~\bibnamefont {Horodecki}},\ }\href@noop {} {\bibfield  {journal}
  {\bibinfo  {journal} {Nature Physics}\ }\textbf {\bibinfo {volume} {9}},\
  \bibinfo {pages} {721} (\bibinfo {year} {2013})}\BibitemShut {NoStop}%
\bibitem [{\citenamefont {Lieb}\ and\ \citenamefont
  {Ruskai}(1973)}]{PhysRevLett.30.434}%
  \BibitemOpen
  \bibfield  {author} {\bibinfo {author} {\bibfnamefont {E.~H.}\ \bibnamefont
  {Lieb}}\ and\ \bibinfo {author} {\bibfnamefont {M.~B.}\ \bibnamefont
  {Ruskai}},\ }\href {\doibase 10.1103/PhysRevLett.30.434} {\bibfield
  {journal} {\bibinfo  {journal} {Physical Review Letters}\ }\textbf {\bibinfo
  {volume} {30}},\ \bibinfo {pages} {434} (\bibinfo {year} {1973})}\BibitemShut
  {NoStop}%
\bibitem [{\citenamefont {Devetak}\ and\ \citenamefont {Yard}(2008)}]{DY08}%
  \BibitemOpen
  \bibfield  {author} {\bibinfo {author} {\bibfnamefont {I.}~\bibnamefont
  {Devetak}}\ and\ \bibinfo {author} {\bibfnamefont {J.}~\bibnamefont {Yard}},\
  }\href {\doibase 10.1103/PhysRevLett.100.230501} {\bibfield  {journal}
  {\bibinfo  {journal} {Physical Review Letters}\ }\textbf {\bibinfo {volume}
  {100}},\ \bibinfo {pages} {230501} (\bibinfo {year} {2008})}\BibitemShut
  {NoStop}%
\bibitem [{\citenamefont {Tomamichel}\ and\ \citenamefont
  {Hayashi}(2018)}]{Tomamichel2018}%
  \BibitemOpen
  \bibfield  {author} {\bibinfo {author} {\bibfnamefont {M.}~\bibnamefont
  {Tomamichel}}\ and\ \bibinfo {author} {\bibfnamefont {M.}~\bibnamefont
  {Hayashi}},\ }\href@noop {} {\bibfield  {journal} {\bibinfo  {journal} {IEEE
  Transactions on Information Theory}\ }\textbf {\bibinfo {volume} {64}},\
  \bibinfo {pages} {1064} (\bibinfo {year} {2018})}\BibitemShut {NoStop}%
\bibitem [{\citenamefont {Cooney}\ \emph {et~al.}(2016)\citenamefont {Cooney},
  \citenamefont {Hirche}, \citenamefont {Morgan}, \citenamefont {Olson},
  \citenamefont {Seshadreesan}, \citenamefont {Watrous},\ and\ \citenamefont
  {Wilde}}]{PhysRevA.94.022310}%
  \BibitemOpen
  \bibfield  {author} {\bibinfo {author} {\bibfnamefont {T.}~\bibnamefont
  {Cooney}}, \bibinfo {author} {\bibfnamefont {C.}~\bibnamefont {Hirche}},
  \bibinfo {author} {\bibfnamefont {C.}~\bibnamefont {Morgan}}, \bibinfo
  {author} {\bibfnamefont {J.~P.}\ \bibnamefont {Olson}}, \bibinfo {author}
  {\bibfnamefont {K.~P.}\ \bibnamefont {Seshadreesan}}, \bibinfo {author}
  {\bibfnamefont {J.}~\bibnamefont {Watrous}}, \ and\ \bibinfo {author}
  {\bibfnamefont {M.~M.}\ \bibnamefont {Wilde}},\ }\href {\doibase
  10.1103/PhysRevA.94.022310} {\bibfield  {journal} {\bibinfo  {journal}
  {Physical Review A}\ }\textbf {\bibinfo {volume} {94}},\ \bibinfo {pages}
  {022310} (\bibinfo {year} {2016})}\BibitemShut {NoStop}%
\bibitem [{\citenamefont {Berta}\ \emph {et~al.}(2017)\citenamefont {Berta},
  \citenamefont {Brandao},\ and\ \citenamefont {Hirche}}]{Berta2017}%
  \BibitemOpen
  \bibfield  {author} {\bibinfo {author} {\bibfnamefont {M.}~\bibnamefont
  {Berta}}, \bibinfo {author} {\bibfnamefont {F.~G.}\ \bibnamefont {Brandao}},
  \ and\ \bibinfo {author} {\bibfnamefont {C.}~\bibnamefont {Hirche}},\
  }\href@noop {} {\bibfield  {journal} {\bibinfo  {journal} {arXiv:1709.07268}\
  } (\bibinfo {year} {2017})}\BibitemShut {NoStop}%
\bibitem [{\citenamefont {Zeng}\ \emph {et~al.}(2015)\citenamefont {Zeng},
  \citenamefont {Chen}, \citenamefont {Zhou},\ and\ \citenamefont
  {Wen}}]{Zeng2015}%
  \BibitemOpen
  \bibfield  {author} {\bibinfo {author} {\bibfnamefont {B.}~\bibnamefont
  {Zeng}}, \bibinfo {author} {\bibfnamefont {X.}~\bibnamefont {Chen}}, \bibinfo
  {author} {\bibfnamefont {D.-L.}\ \bibnamefont {Zhou}}, \ and\ \bibinfo
  {author} {\bibfnamefont {X.-G.}\ \bibnamefont {Wen}},\ }\href@noop {}
  {\bibfield  {journal} {\bibinfo  {journal} {arXiv:1508.02595}\ } (\bibinfo
  {year} {2015})}\BibitemShut {NoStop}%
\bibitem [{\citenamefont {Kim}(2012)}]{Kim2012}%
  \BibitemOpen
  \bibfield  {author} {\bibinfo {author} {\bibfnamefont {I.~H.}\ \bibnamefont
  {Kim}},\ }\href@noop {} {\bibfield  {journal} {\bibinfo  {journal} {Physical
  Review B}\ }\textbf {\bibinfo {volume} {86}},\ \bibinfo {pages} {245116}
  (\bibinfo {year} {2012})}\BibitemShut {NoStop}%
\bibitem [{\citenamefont {Pastawski}\ \emph {et~al.}(2017)\citenamefont
  {Pastawski}, \citenamefont {Eisert},\ and\ \citenamefont
  {Wilming}}]{Pastawski2017}%
  \BibitemOpen
  \bibfield  {author} {\bibinfo {author} {\bibfnamefont {F.}~\bibnamefont
  {Pastawski}}, \bibinfo {author} {\bibfnamefont {J.}~\bibnamefont {Eisert}}, \
  and\ \bibinfo {author} {\bibfnamefont {H.}~\bibnamefont {Wilming}},\
  }\href@noop {} {\bibfield  {journal} {\bibinfo  {journal} {Physical Review
  Letters}\ }\textbf {\bibinfo {volume} {119}},\ \bibinfo {pages} {020501}
  (\bibinfo {year} {2017})}\BibitemShut {NoStop}%
\bibitem [{\citenamefont {Czech}\ \emph {et~al.}(2015)\citenamefont {Czech},
  \citenamefont {Lamprou}, \citenamefont {McCandlish},\ and\ \citenamefont
  {Sully}}]{Czech2015}%
  \BibitemOpen
  \bibfield  {author} {\bibinfo {author} {\bibfnamefont {B.}~\bibnamefont
  {Czech}}, \bibinfo {author} {\bibfnamefont {L.}~\bibnamefont {Lamprou}},
  \bibinfo {author} {\bibfnamefont {S.}~\bibnamefont {McCandlish}}, \ and\
  \bibinfo {author} {\bibfnamefont {J.}~\bibnamefont {Sully}},\ }\href@noop {}
  {\bibfield  {journal} {\bibinfo  {journal} {Journal of High Energy Physics}\
  }\textbf {\bibinfo {volume} {2015}},\ \bibinfo {pages} {175} (\bibinfo {year}
  {2015})}\BibitemShut {NoStop}%
\bibitem [{\citenamefont {Mahajan}\ \emph {et~al.}(2016)\citenamefont
  {Mahajan}, \citenamefont {Freeman}, \citenamefont {Mumford}, \citenamefont
  {Tubman},\ and\ \citenamefont {Swingle}}]{Mahajan2016}%
  \BibitemOpen
  \bibfield  {author} {\bibinfo {author} {\bibfnamefont {R.}~\bibnamefont
  {Mahajan}}, \bibinfo {author} {\bibfnamefont {C.~D.}\ \bibnamefont
  {Freeman}}, \bibinfo {author} {\bibfnamefont {S.}~\bibnamefont {Mumford}},
  \bibinfo {author} {\bibfnamefont {N.}~\bibnamefont {Tubman}}, \ and\ \bibinfo
  {author} {\bibfnamefont {B.}~\bibnamefont {Swingle}},\ }\href@noop {}
  {\bibfield  {journal} {\bibinfo  {journal} {arXiv:1608.05074}\ } (\bibinfo
  {year} {2016})}\BibitemShut {NoStop}%
\bibitem [{\citenamefont {Bettencourt}\ \emph {et~al.}(2008)\citenamefont
  {Bettencourt}, \citenamefont {Gintautas},\ and\ \citenamefont
  {Ham}}]{Bettencourt2008}%
  \BibitemOpen
  \bibfield  {author} {\bibinfo {author} {\bibfnamefont {L.~M.}\ \bibnamefont
  {Bettencourt}}, \bibinfo {author} {\bibfnamefont {V.}~\bibnamefont
  {Gintautas}}, \ and\ \bibinfo {author} {\bibfnamefont {M.~I.}\ \bibnamefont
  {Ham}},\ }\href@noop {} {\bibfield  {journal} {\bibinfo  {journal} {Physical
  Review Letters}\ }\textbf {\bibinfo {volume} {100}},\ \bibinfo {pages}
  {238701} (\bibinfo {year} {2008})}\BibitemShut {NoStop}%
\bibitem [{\citenamefont {Petz}(1986)}]{Pet86}%
  \BibitemOpen
  \bibfield  {author} {\bibinfo {author} {\bibfnamefont {D.}~\bibnamefont
  {Petz}},\ }\href {\doibase 10.1007/BF01212345} {\bibfield  {journal}
  {\bibinfo  {journal} {Communications in Mathematical Physics}\ }\textbf
  {\bibinfo {volume} {105}},\ \bibinfo {pages} {123} (\bibinfo {year}
  {1986})}\BibitemShut {NoStop}%
\bibitem [{\citenamefont {Seshadreesan}\ and\ \citenamefont
  {Wilde}(2015)}]{SW14}%
  \BibitemOpen
  \bibfield  {author} {\bibinfo {author} {\bibfnamefont {K.~P.}\ \bibnamefont
  {Seshadreesan}}\ and\ \bibinfo {author} {\bibfnamefont {M.~M.}\ \bibnamefont
  {Wilde}},\ }\href@noop {} {\bibfield  {journal} {\bibinfo  {journal}
  {Physical Review A}\ }\textbf {\bibinfo {volume} {92}},\ \bibinfo {pages}
  {042321} (\bibinfo {year} {2015})}\BibitemShut {NoStop}%
\bibitem [{\citenamefont {Fawzi}\ and\ \citenamefont {Renner}(2015)}]{FR14}%
  \BibitemOpen
  \bibfield  {author} {\bibinfo {author} {\bibfnamefont {O.}~\bibnamefont
  {Fawzi}}\ and\ \bibinfo {author} {\bibfnamefont {R.}~\bibnamefont {Renner}},\
  }\href@noop {} {\bibfield  {journal} {\bibinfo  {journal} {Communications in
  Mathematical Physics}\ }\textbf {\bibinfo {volume} {340}},\ \bibinfo {pages}
  {575} (\bibinfo {year} {2015})}\BibitemShut {NoStop}%
\bibitem [{\citenamefont {Levin}\ and\ \citenamefont {Wen}(2006)}]{LW06}%
  \BibitemOpen
  \bibfield  {author} {\bibinfo {author} {\bibfnamefont {M.}~\bibnamefont
  {Levin}}\ and\ \bibinfo {author} {\bibfnamefont {X.-G.}\ \bibnamefont
  {Wen}},\ }\href {\doibase 10.1103/PhysRevLett.96.110405} {\bibfield
  {journal} {\bibinfo  {journal} {Physical Review Letters}\ }\textbf {\bibinfo
  {volume} {96}},\ \bibinfo {pages} {110405} (\bibinfo {year}
  {2006})}\BibitemShut {NoStop}%
\bibitem [{\citenamefont {Kitaev}\ and\ \citenamefont {Preskill}(2006)}]{PK06}%
  \BibitemOpen
  \bibfield  {author} {\bibinfo {author} {\bibfnamefont {A.}~\bibnamefont
  {Kitaev}}\ and\ \bibinfo {author} {\bibfnamefont {J.}~\bibnamefont
  {Preskill}},\ }\href {\doibase 10.1103/PhysRevLett.96.110404} {\bibfield
  {journal} {\bibinfo  {journal} {Physical Review Letters}\ }\textbf {\bibinfo
  {volume} {96}},\ \bibinfo {pages} {110404} (\bibinfo {year}
  {2006})}\BibitemShut {NoStop}%
\bibitem [{\citenamefont {Lashkari}(2017)}]{L17}%
  \BibitemOpen
  \bibfield  {author} {\bibinfo {author} {\bibfnamefont {N.}~\bibnamefont
  {Lashkari}},\ }\href@noop {} {\  (\bibinfo {year} {2017})},\ \bibinfo {note}
  {arXiv:1704.05077}\BibitemShut {NoStop}%
\bibitem [{\citenamefont {Schumacher}\ and\ \citenamefont
  {Westmoreland}(2002)}]{qip2002schu}%
  \BibitemOpen
  \bibfield  {author} {\bibinfo {author} {\bibfnamefont {B.}~\bibnamefont
  {Schumacher}}\ and\ \bibinfo {author} {\bibfnamefont {M.~D.}\ \bibnamefont
  {Westmoreland}},\ }\href@noop {} {\bibfield  {journal} {\bibinfo  {journal}
  {Quantum Information Processing}\ }\textbf {\bibinfo {volume} {1}},\ \bibinfo
  {pages} {5} (\bibinfo {year} {2002})},\ \bibinfo {note}
  {arXiv:quant-ph/0112106}\BibitemShut {NoStop}%
\bibitem [{\citenamefont {Horodecki}\ \emph {et~al.}(2005)\citenamefont
  {Horodecki}, \citenamefont {Oppenheim},\ and\ \citenamefont
  {Winter}}]{horodecki05}%
  \BibitemOpen
  \bibfield  {author} {\bibinfo {author} {\bibfnamefont {M.}~\bibnamefont
  {Horodecki}}, \bibinfo {author} {\bibfnamefont {J.}~\bibnamefont
  {Oppenheim}}, \ and\ \bibinfo {author} {\bibfnamefont {A.}~\bibnamefont
  {Winter}},\ }\href {\doibase 10.1038/nature03909} {\bibfield  {journal}
  {\bibinfo  {journal} {Nature}\ }\textbf {\bibinfo {volume} {436}},\ \bibinfo
  {pages} {673} (\bibinfo {year} {2005})}\BibitemShut {NoStop}%
\bibitem [{\citenamefont {Hayden}\ \emph {et~al.}(2008)\citenamefont {Hayden},
  \citenamefont {Horodecki}, \citenamefont {Winter},\ and\ \citenamefont
  {Yard}}]{qcap2008first}%
  \BibitemOpen
  \bibfield  {author} {\bibinfo {author} {\bibfnamefont {P.}~\bibnamefont
  {Hayden}}, \bibinfo {author} {\bibfnamefont {M.}~\bibnamefont {Horodecki}},
  \bibinfo {author} {\bibfnamefont {A.}~\bibnamefont {Winter}}, \ and\ \bibinfo
  {author} {\bibfnamefont {J.}~\bibnamefont {Yard}},\ }\href@noop {} {\bibfield
   {journal} {\bibinfo  {journal} {Open Systems \& Information Dynamics}\
  }\textbf {\bibinfo {volume} {15}},\ \bibinfo {pages} {7} (\bibinfo {year}
  {2008})}\BibitemShut {NoStop}%
\bibitem [{\citenamefont {Dupuis}\ \emph {et~al.}(2014)\citenamefont {Dupuis},
  \citenamefont {Berta}, \citenamefont {Wullschleger},\ and\ \citenamefont
  {Renner}}]{dupuis10}%
  \BibitemOpen
  \bibfield  {author} {\bibinfo {author} {\bibfnamefont {F.}~\bibnamefont
  {Dupuis}}, \bibinfo {author} {\bibfnamefont {M.}~\bibnamefont {Berta}},
  \bibinfo {author} {\bibfnamefont {J.}~\bibnamefont {Wullschleger}}, \ and\
  \bibinfo {author} {\bibfnamefont {R.}~\bibnamefont {Renner}},\ }\href
  {\doibase 10.1007/s00220-014-1990-4} {\bibfield  {journal} {\bibinfo
  {journal} {Communications in Mathematical Physics}\ }\textbf {\bibinfo
  {volume} {328}},\ \bibinfo {pages} {251} (\bibinfo {year}
  {2014})}\BibitemShut {NoStop}%
\bibitem [{\citenamefont {Dupuis}(2009)}]{dupuis09}%
  \BibitemOpen
  \bibfield  {author} {\bibinfo {author} {\bibfnamefont {F.}~\bibnamefont
  {Dupuis}},\ }\emph {\bibinfo {title} {{The Decoupling Approach to Quantum
  Information Theory}}},\ \href@noop {} {Ph.D. thesis},\ \bibinfo  {school}
  {Universit{\'{e}} de Montr{\'{e}}al} (\bibinfo {year} {2009})\BibitemShut
  {NoStop}%
\bibitem [{\citenamefont {Berta}\ \emph {et~al.}(2014)\citenamefont {Berta},
  \citenamefont {Fawzi},\ and\ \citenamefont {Wehner}}]{Berta14}%
  \BibitemOpen
  \bibfield  {author} {\bibinfo {author} {\bibfnamefont {M.}~\bibnamefont
  {Berta}}, \bibinfo {author} {\bibfnamefont {O.}~\bibnamefont {Fawzi}}, \ and\
  \bibinfo {author} {\bibfnamefont {S.}~\bibnamefont {Wehner}},\ }\href@noop {}
  {\bibfield  {journal} {\bibinfo  {journal} {Information Theory, IEEE
  Transactions on}\ }\textbf {\bibinfo {volume} {60}},\ \bibinfo {pages} {1168}
  (\bibinfo {year} {2014})}\BibitemShut {NoStop}%
\bibitem [{\citenamefont {del Rio}\ \emph {et~al.}(2011)\citenamefont {del
  Rio}, \citenamefont {\AA{}berg}, \citenamefont {Renner}, \citenamefont
  {Dahlsten},\ and\ \citenamefont {Vedral}}]{RARDV10}%
  \BibitemOpen
  \bibfield  {author} {\bibinfo {author} {\bibfnamefont {L.}~\bibnamefont {del
  Rio}}, \bibinfo {author} {\bibfnamefont {J.}~\bibnamefont {\AA{}berg}},
  \bibinfo {author} {\bibfnamefont {R.}~\bibnamefont {Renner}}, \bibinfo
  {author} {\bibfnamefont {O.}~\bibnamefont {Dahlsten}}, \ and\ \bibinfo
  {author} {\bibfnamefont {V.}~\bibnamefont {Vedral}},\ }\href@noop {}
  {\bibfield  {journal} {\bibinfo  {journal} {Nature}\ }\textbf {\bibinfo
  {volume} {474}},\ \bibinfo {pages} {61} (\bibinfo {year} {2011})}\BibitemShut
  {NoStop}%
\bibitem [{\citenamefont {Aberg}(2013)}]{Aberg13}%
  \BibitemOpen
  \bibfield  {author} {\bibinfo {author} {\bibfnamefont {J.}~\bibnamefont
  {Aberg}},\ }\href@noop {} {\bibfield  {journal} {\bibinfo  {journal} {Nature
  Communications}\ }\textbf {\bibinfo {volume} {4}},\ \bibinfo {pages} {1925}
  (\bibinfo {year} {2013})}\BibitemShut {NoStop}%
\bibitem [{\citenamefont {Abeyesinghe}\ \emph {et~al.}(2009)\citenamefont
  {Abeyesinghe}, \citenamefont {Devetak}, \citenamefont {Hayden},\ and\
  \citenamefont {Winter}}]{ADHW06FQSW}%
  \BibitemOpen
  \bibfield  {author} {\bibinfo {author} {\bibfnamefont {A.}~\bibnamefont
  {Abeyesinghe}}, \bibinfo {author} {\bibfnamefont {I.}~\bibnamefont
  {Devetak}}, \bibinfo {author} {\bibfnamefont {P.}~\bibnamefont {Hayden}}, \
  and\ \bibinfo {author} {\bibfnamefont {A.}~\bibnamefont {Winter}},\
  }\href@noop {} {\bibfield  {journal} {\bibinfo  {journal} {Proceedings of the
  Royal Society A}\ }\textbf {\bibinfo {volume} {465}},\ \bibinfo {pages}
  {2537} (\bibinfo {year} {2009})}\BibitemShut {NoStop}%
\bibitem [{\citenamefont {Wakakuwa}\ \emph
  {et~al.}(2017{\natexlab{a}})\citenamefont {Wakakuwa}, \citenamefont {Soeda},\
  and\ \citenamefont {Murao}}]{WSM15}%
  \BibitemOpen
  \bibfield  {author} {\bibinfo {author} {\bibfnamefont {E.}~\bibnamefont
  {Wakakuwa}}, \bibinfo {author} {\bibfnamefont {A.}~\bibnamefont {Soeda}}, \
  and\ \bibinfo {author} {\bibfnamefont {M.}~\bibnamefont {Murao}},\
  }\href@noop {} {\bibfield  {journal} {\bibinfo  {journal} {IEEE Transactions
  on Information Theory}\ }\textbf {\bibinfo {volume} {63}},\ \bibinfo {pages}
  {1280} (\bibinfo {year} {2017}{\natexlab{a}})}\BibitemShut {NoStop}%
\bibitem [{\citenamefont {Accardi}\ and\ \citenamefont
  {Frigerio}(1983)}]{AF83}%
  \BibitemOpen
  \bibfield  {author} {\bibinfo {author} {\bibfnamefont {L.}~\bibnamefont
  {Accardi}}\ and\ \bibinfo {author} {\bibfnamefont {A.}~\bibnamefont
  {Frigerio}},\ }\href@noop {} {\bibfield  {journal} {\bibinfo  {journal}
  {Proceedings of the Royal Irish Academy. Section A: Mathematical and Physical
  Sciences}\ }\textbf {\bibinfo {volume} {83A}},\ \bibinfo {pages} {251}
  (\bibinfo {year} {1983})}\BibitemShut {NoStop}%
\bibitem [{\citenamefont {Wakakuwa}\ \emph
  {et~al.}(2017{\natexlab{b}})\citenamefont {Wakakuwa}, \citenamefont {Soeda},\
  and\ \citenamefont {Murao}}]{WSM15a}%
  \BibitemOpen
  \bibfield  {author} {\bibinfo {author} {\bibfnamefont {E.}~\bibnamefont
  {Wakakuwa}}, \bibinfo {author} {\bibfnamefont {A.}~\bibnamefont {Soeda}}, \
  and\ \bibinfo {author} {\bibfnamefont {M.}~\bibnamefont {Murao}},\
  }\href@noop {} {\bibfield  {journal} {\bibinfo  {journal} {IEEE Transactions
  on Information Theory}\ }\textbf {\bibinfo {volume} {63}},\ \bibinfo {pages}
  {5360} (\bibinfo {year} {2017}{\natexlab{b}})}\BibitemShut {NoStop}%
\bibitem [{\citenamefont {Koashi}\ and\ \citenamefont
  {Imoto}(2002)}]{koashi02}%
  \BibitemOpen
  \bibfield  {author} {\bibinfo {author} {\bibfnamefont {M.}~\bibnamefont
  {Koashi}}\ and\ \bibinfo {author} {\bibfnamefont {N.}~\bibnamefont {Imoto}},\
  }\href {\doibase 10.1103/PhysRevA.66.022318} {\bibfield  {journal} {\bibinfo
  {journal} {Physical Review A}\ }\textbf {\bibinfo {volume} {66}},\ \bibinfo
  {pages} {022318} (\bibinfo {year} {2002})}\BibitemShut {NoStop}%
\bibitem [{\citenamefont {Berta}\ \emph {et~al.}(2016)\citenamefont {Berta},
  \citenamefont {Majenz}, \citenamefont {Brandao},\ and\ \citenamefont
  {Wilde}}]{BBMW_full16}%
  \BibitemOpen
  \bibfield  {author} {\bibinfo {author} {\bibfnamefont {M.}~\bibnamefont
  {Berta}}, \bibinfo {author} {\bibfnamefont {C.}~\bibnamefont {Majenz}},
  \bibinfo {author} {\bibfnamefont {F.~G. S.~L.}\ \bibnamefont {Brandao}}, \
  and\ \bibinfo {author} {\bibfnamefont {M.~M.}\ \bibnamefont {Wilde}},\
  }\href@noop {} {\  (\bibinfo {year} {2016})},\ \bibinfo {note}
  {arXiv:1609.06994}\BibitemShut {NoStop}%
\bibitem [{\citenamefont {Wakakuwa}\ \emph
  {et~al.}(2017{\natexlab{c}})\citenamefont {Wakakuwa}, \citenamefont {Soeda},\
  and\ \citenamefont {Murao}}]{WSM15_2}%
  \BibitemOpen
  \bibfield  {author} {\bibinfo {author} {\bibfnamefont {E.}~\bibnamefont
  {Wakakuwa}}, \bibinfo {author} {\bibfnamefont {A.}~\bibnamefont {Soeda}}, \
  and\ \bibinfo {author} {\bibfnamefont {M.}~\bibnamefont {Murao}},\
  }\href@noop {} {\bibfield  {journal} {\bibinfo  {journal} {IEEE Transactions
  on Information Theory}\ }\textbf {\bibinfo {volume} {63}},\ \bibinfo {pages}
  {5372} (\bibinfo {year} {2017}{\natexlab{c}})}\BibitemShut {NoStop}%
\bibitem [{\citenamefont {Berta}\ and\ \citenamefont
  {Tomamichel}(2016)}]{BT15}%
  \BibitemOpen
  \bibfield  {author} {\bibinfo {author} {\bibfnamefont {M.}~\bibnamefont
  {Berta}}\ and\ \bibinfo {author} {\bibfnamefont {M.}~\bibnamefont
  {Tomamichel}},\ }\href {\doibase 10.1109/TIT.2016.2527683} {\bibfield
  {journal} {\bibinfo  {journal} {IEEE Transactions on Information Theory}\
  }\textbf {\bibinfo {volume} {62}},\ \bibinfo {pages} {1758} (\bibinfo {year}
  {2016})}\BibitemShut {NoStop}%
\bibitem [{Note2()}]{Note2}%
  \BibitemOpen
  \bibinfo {note} {Non-integer $\protect \sqrt {M}$ can easily be taken care of
  as in~\cite {BBMW_full16}.}\BibitemShut {Stop}%
\bibitem [{\citenamefont {Wilde}(2016)}]{W16}%
  \BibitemOpen
  \bibfield  {author} {\bibinfo {author} {\bibfnamefont {M.~M.}\ \bibnamefont
  {Wilde}},\ }\href@noop {} {\  (\bibinfo {year} {2016})},\ \bibinfo {note}
  {arXiv:1106.1445v7}\BibitemShut {NoStop}%
\bibitem [{\citenamefont {Berta}\ \emph {et~al.}(2015)\citenamefont {Berta},
  \citenamefont {Seshadreesan},\ and\ \citenamefont {Wilde}}]{BSW14}%
  \BibitemOpen
  \bibfield  {author} {\bibinfo {author} {\bibfnamefont {M.}~\bibnamefont
  {Berta}}, \bibinfo {author} {\bibfnamefont {K.}~\bibnamefont {Seshadreesan}},
  \ and\ \bibinfo {author} {\bibfnamefont {M.~M.}\ \bibnamefont {Wilde}},\
  }\href@noop {} {\bibfield  {journal} {\bibinfo  {journal} {Journal of
  Mathematical Physics}\ }\textbf {\bibinfo {volume} {56}},\ \bibinfo {pages}
  {022205} (\bibinfo {year} {2015})}\BibitemShut {NoStop}%
\bibitem [{\citenamefont {Ye}\ \emph {et~al.}(2008)\citenamefont {Ye},
  \citenamefont {Bai},\ and\ \citenamefont {Wang}}]{PhysRevA.78.030302}%
  \BibitemOpen
  \bibfield  {author} {\bibinfo {author} {\bibfnamefont {M.-Y.}\ \bibnamefont
  {Ye}}, \bibinfo {author} {\bibfnamefont {Y.-K.}\ \bibnamefont {Bai}}, \ and\
  \bibinfo {author} {\bibfnamefont {Z.~D.}\ \bibnamefont {Wang}},\ }\href
  {\doibase 10.1103/PhysRevA.78.030302} {\bibfield  {journal} {\bibinfo
  {journal} {Physical Review A}\ }\textbf {\bibinfo {volume} {78}},\ \bibinfo
  {pages} {030302} (\bibinfo {year} {2008})}\BibitemShut {NoStop}%
\bibitem [{\citenamefont {Berta}\ \emph {et~al.}(2011)\citenamefont {Berta},
  \citenamefont {Christandl},\ and\ \citenamefont
  {Renner}}]{bertachristandl11}%
  \BibitemOpen
  \bibfield  {author} {\bibinfo {author} {\bibfnamefont {M.}~\bibnamefont
  {Berta}}, \bibinfo {author} {\bibfnamefont {M.}~\bibnamefont {Christandl}}, \
  and\ \bibinfo {author} {\bibfnamefont {R.}~\bibnamefont {Renner}},\ }\href
  {\doibase 10.1007/s00220-011-1309-7} {\bibfield  {journal} {\bibinfo
  {journal} {Communications in Mathematical Physics}\ }\textbf {\bibinfo
  {volume} {306}},\ \bibinfo {pages} {579} (\bibinfo {year}
  {2011})}\BibitemShut {NoStop}%
\bibitem [{\citenamefont {Kim}(2013)}]{kim13}%
  \BibitemOpen
  \bibfield  {author} {\bibinfo {author} {\bibfnamefont {I.~H.}\ \bibnamefont
  {Kim}},\ }\emph {\bibinfo {title} {{Conditional independence in quantum
  many-body systems}}},\ \href@noop {} {Ph.D. thesis},\ \bibinfo  {school}
  {California Institute of Technology} (\bibinfo {year} {2013})\BibitemShut
  {NoStop}%
\bibitem [{\citenamefont {Hayashi}(2015)}]{Hayashi2015}%
  \BibitemOpen
  \bibfield  {author} {\bibinfo {author} {\bibfnamefont {M.}~\bibnamefont
  {Hayashi}},\ }\href@noop {} {\bibfield  {journal} {\bibinfo  {journal} {IEEE
  Transactions on Information Theory}\ }\textbf {\bibinfo {volume} {61}},\
  \bibinfo {pages} {5595} (\bibinfo {year} {2015})}\BibitemShut {NoStop}%
\bibitem [{\citenamefont {Hayashi}(2017)}]{Hayashi2017}%
  \BibitemOpen
  \bibfield  {author} {\bibinfo {author} {\bibfnamefont {M.}~\bibnamefont
  {Hayashi}},\ }\href@noop {} {\emph {\bibinfo {title} {Quantum Information
  Theory - Mathematical Foundation}}}\ (\bibinfo  {publisher} {Springer},\
  \bibinfo {year} {2017})\BibitemShut {NoStop}%
\bibitem [{\citenamefont {Christandl}\ and\ \citenamefont
  {Winter}(2004)}]{CW04}%
  \BibitemOpen
  \bibfield  {author} {\bibinfo {author} {\bibfnamefont {M.}~\bibnamefont
  {Christandl}}\ and\ \bibinfo {author} {\bibfnamefont {A.}~\bibnamefont
  {Winter}},\ }\href@noop {} {\bibfield  {journal} {\bibinfo  {journal}
  {Journal of Mathematical Physics}\ }\textbf {\bibinfo {volume} {45}},\
  \bibinfo {pages} {829} (\bibinfo {year} {2004})}\BibitemShut {NoStop}%
\bibitem [{\citenamefont {Ollivier}\ and\ \citenamefont
  {Zurek}(2001)}]{zurek01}%
  \BibitemOpen
  \bibfield  {author} {\bibinfo {author} {\bibfnamefont {H.}~\bibnamefont
  {Ollivier}}\ and\ \bibinfo {author} {\bibfnamefont {W.~H.}\ \bibnamefont
  {Zurek}},\ }\href {\doibase 10.1103/PhysRevLett.88.017901} {\bibfield
  {journal} {\bibinfo  {journal} {Physical Review Letters}\ }\textbf {\bibinfo
  {volume} {88}},\ \bibinfo {pages} {017901} (\bibinfo {year}
  {2001})}\BibitemShut {NoStop}%
\end{thebibliography}%

\end{document}